\author{Francesco Giglio}
\theoremstyle{definition}
\newtheorem{theorem}{Theorem}[section]
\newtheorem{lemma}{Lemma}[section]
\newtheorem{definition}{Definition}[section]
\newtheorem{corollary}{Corollary}[section]
\newtheorem{proposition}{Proposition}[definition]
\numberwithin{equation}{section}
\newcommand{\diag}{\textup{diag}}
\DeclareMathOperator{\Tr}{Tr}
\DeclareMathAlphabet{\mathbfi}{OT1}{cmr}{bx}{it}
\DeclareMathAlphabet{\mathpzc}{OT1}{pzc}{m}{it}
\newcommand{\benumerate}{\begin{enumerate}}
\newcommand{\eenumerate}{\end{enumerate}}
\newcommand{\bitemize}{\begin{itemize}}
\newcommand{\eitemize}{\end{itemize}}
\newcommand{\der}[2]{\frac{\partial #1}{\partial #2}}
\newcommand{\dersec}[2]{\frac{\partial^{2} #1}{\partial #2^{2}}}
\newcommand{\dermixd}[3]{\frac{\partial^{2} #1}{\partial #2 ~\partial #3}}
\newcommand{\av}[1]{\langle #1 \rangle}
\begin{document}

\title{
Complete integrability and equilibrium thermodynamics of biaxial nematic systems with discrete orientational degrees of freedom}

\author{Giovanni De Matteis$^{\;a,b)}$, Francesco Giglio$^{\;c)}$, Antonio Moro$^{\;d)}$}

\date{}

\maketitle

\begin{center}{\small $^{a)}$Dipartimento di Matematica e Fisica, Universit\a`a del Salento, Lecce, Italy \\
		\vspace{0.1cm}
		$^{b)}$I.N.F.N. Sezione di Lecce,  Lecce, Italy \\
\vspace{0.1cm}
{{\small $^{c)}$ School of Mathematics and Statistics, University of Glasgow, Glasgow, UK} 
\vspace{0.1cm}}\footnote[1]{Corresponding Author: francesco.giglio@glasgow.ac.uk}\\
{\small $^{d)}$ Department of Mathematics, Physics and Electrical Engineering, Northumbria University Newcastle, Newcastle upon Tyne, UK}}
\end{center}

    \vspace{-7mm}
\begin{center}

\end{center}

\begin{abstract}
We study a  discrete version of a biaxial nematic liquid crystal model with external fields via an approach based on the solution of differential identities for the partition function. In the thermodynamic limit, we derive the free energy of the model and the associated  closed set of equations of state involving four order parameters, proving the integrability and exact solvability of the model.  The equations of state are specified via a suitable representation of the orientational order parameters, which imply   two-order parameter reductions in the absence of external fields.  A detailed exact analysis of the equations of state reveal a rich phase diagram where isotropic versus uniaxial versus biaxial phase transitions are explicitly described,  including the existence of triple  and tricritical points.  Results on the discrete models are qualitatively consistent with their continuum analog. This observation suggests that, in more general settings, discrete models may be used to capture and describe phenomena that also occur in the continuum for which exact equations of state in closed form are not available.

\vspace{.4cm}

\noindent Keywords: Liquid Crystals $|$   Integrability $|$ Phase Transitions $|$ Biaxiality 
\end{abstract}


\section{Introduction}
Mean-field models in Statistical Mechanics and Thermodynamics are a powerful tool to explore general qualitative properties of thermodynamic systems that, otherwise, would not be analytically treatable. Both conceptual and historical importance of mean-field models is testified by the celebrated van der Waals and Curie-Weiss models~\cite{Stanley}, complemented by Maxwell's equal areas rule (see e.g. \cite{Callen}) which provided the first qualitative description of the mechanisms for the occurrence of phase transitions in fluids and magnetic systems. It is also well established that, in order to obtain accurate quantitative predictions, mean-field models need to be replaced by models with finite range interactions which are generally more challenging, and solvable cases require the use of sophisticated techniques as, for example, the transfer matrix and the renormalisation group, see e.g.~\cite{Parisi}.  \\
Spin models are the archetypal example of  models aimed at describing the macroscopic and collective behaviour of systems made up of components with internal degrees of freedom (in the simple case the spin $\sigma =\pm 1 $) with pairwise (and also higher order) interactions. Such models, although originally introduced in condensed matter physics to explain magnetic properties of materials are, however, of universal importance, as testified by applications in other disciplines such as Biology, Economics, Social Sciences, see e.g.~\cite{BarraContucciGallo,BarraContucciVernia,MoroScirep} and references therein. It is also worth noting that a resurgence of interest, in the last decade, for spin-like mean-field models is due to the studies concerning their deployment for information processing, classification, memory retrieval and, more generally, machine learning purposes \cite{agliari2022}. These studies, originally inspired by the pioneering work of Hopfield~\cite{Hopfield}, led to the definition of models for neural networks, such as the Boltzmann machines and their variations, based on spin glasses and statistical inference algorithms for training and learning~\cite{Contucci}. The key idea in this context is that spin particles sit at a node of a graph and possess internal degrees of freedom, i.e. their spin values are interpreted as node states of the neural network associated to the graph. The spin-spin interaction constant corresponds to the weight associated to the links on the network.\\
In this paper, we consider a biaxial version of the discrete Maier-Saupe model for nematic liquid crystals (LCs) as studied in \cite{dgm nematics}, whose structure  resembles a multi-partite spin model with  spin components subject to suitable constraints.
The model consists of a system of particles endowed with an internal assigned geometry and symmetries with only orientational degrees of freedom. Not surprisingly, the exact analytical description of their macroscopic thermodynamic behaviour, phase transitions and emergent properties is, in general, not available and therefore alternative approaches and approximation techniques need to be adopted~\cite{SonnetVirga}. Numerical simulations \cite{Zannoni}, Landau's expansion of the free energy~ \cite{Gramsbergen,Mukherjee2009},  group representation and bifurcation theory \cite{chillingworth} are approaches that allow to explore, at least locally, i.e. in the neighbourhood of specified values for the thermodynamic parameters, the possible occurrence of criticalities and phase transitions, and estimate relevant thermodynamic quantities such as orientational order, specific heat, critical exponents. Mean-field models are effective in providing insights that complement and support the aforementioned methodologies and all together help achieve accurate qualitative description and predictions on key properties of LCs including those that are paramount for technological applications \cite{Sluckin}.\\
From a physical viewpoint, in the last few decades, the biaxial nematic liquid crystal phase has been the object of much intense study.
The story of this phase has its roots back to 1970 \cite{freiser1970}, when the theoretical physicist Marvin Freiser noted that rather than possessing a rod-like shape (i. e. $D_{\infty h}$ symmetry), as usually assumed, most thermotropic, mesogenic molecules were in fact closer to being board-like,
thus intrinsically biaxial (i. e. endowed with $D_{2h}$ symmetry).
Usually, they produce uniaxial nematic phases as a consequence of the rotational disorder
around the long molecular axis, which eventually yields the definition of a single macroscopic director.
This rotational disorder can be overcome by molecular mutual interactions favoring the molecules to align parallel to one another, thus leading to a thermotropic biaxial nematic phase at
sufficiently low temperatures.
Accordingly, Freiser understood that mesogens should be expected to exhibit a biaxial nematic phase, in addition to the usual uniaxial one.
The prediction of a second nematic phase possessing novel properties and promising potential applications, stimulated considerable interest, as well as not little debate.
In fact, on the experimental side,  stable biaxial phases have been observed in lyotropic systems since the pioneering work of Yu and Saupe \cite{YuSaupe}. In contrast, the experimental proof in favour of their existence
in thermotropic systems has been subject of scrutiny and criticism in \cite{luckhurst2001, berardi1995, berardi2000}. 
In the period 1986 to 2003, the matter remained controversial with no widely accepted results \cite{luckhurst2001, praefcke2001, praefcke2002}.
However, since 2004, clearer experimental evidence was provided for a few classes of compounds, such as polar bent-core or V-shaped
molecules \cite{gortz, senyuk, lehmann}, and organosiloxane tetrapodes or their counterparts with a germanium core \cite{merkel2004, cruz2005, neupane2006, mehl2008, cruz2008}. These compounds have been investigated by several techniques which led to measurements of biaxial order parameters~\cite{southern}.
According to these experimental results, an alternative picture of biaxial nematic order has emerged \cite{berardi2008, tschierske2010, vanakaras2008, peroukidis2009, francescangeli2011}, based on the idea of biaxial domains reoriented by
surface anchoring or external fields. Other researchers \cite{francescangeli2010, samulski2010, zhang2012} have also pointed out that the biaxial nematic order is related
to the onset of smectic fluctuations. Moreover, it has also been remarked that biaxial nematics may be formed from molecules possessing
a lower symmetry than the usually assumed $D_{2h}$ one, as for istance the $C_{2h}$ symmetry \cite{ peroukidis2009,lehmann2011, karahaliou2009, gorkunov2010, osipov2010, dong2010}.
In addition, quite recently \cite{osipov2012, ghoshal2012}, low symmetry interaction models have been addressed, involving dipolar contribution,
so as to describe polar bent-core molecules.
The study of biaxial nematics is not only of theoretical origin, it is also connected with their potential technological applications
in displays \cite{berardi2008, luckhurstnature2004, luckhurstbritish2005, lee2007, berardijcp2008, nagaraj2010}: orientation of the secondary director in response to external perturbations is expected to be significantly faster than  the primary one \cite{tschierske2010, lehmann2011}.
Biaxial nematic phases have also been produced in colloidal suspensions of inorganic compounds \cite{pelletier1999, petukhov2009, petukhov2010}.
More recently, in  \cite{smalyuhk2018},  Smalyuhk et al. have considered a hybrid molecular-colloidal
soft-matter system with orthorhombic biaxial orientational order and fluidity. This molecular-colloidal complex fluid is made up of only uniaxial rod-like building blocks.
In contrast, this complex fluid exhibits a surprising self-assembly into a biaxial nematic liquid crystal with the $D_{2h}$
point group symmetry. Finally, let us mention that very recently, the emergence of biaxial order upon mechanical strain has been proved experimentally in a nematic liquid crystal elastomer, the first synthetic auxetic material at a molecular level \cite{raistrick2022}. By measuring the order parameters during deformation, the deviation from Maier-Saupe theory was detected for the uniaxial order parameters and the biaxial order parameters were deduced, suggesting the occurrence of biaxiality in the initially uniaxial system.\\
On the theoretical side, after Freiser's first prediction \cite{freiser1970}, investigations were actively carried on along different approaches
such as molecular-field or Landau theories, and later on by computer simulations.
By the end of the past century, this collection of theoretical methodologies has shown that single-component models consisting of molecules possessing
$D_{2h}$ symmetry, and interacting via various continuous or hard-core potentials, are capable of producing a biaxial nematic phase under appropriate
thermodynamic conditions \cite{berardi2008, romano2004, pisapv08, preeti2011}.
Theoretical studies usually predict a low-temperature biaxial phase, undergoing a transition to the uniaxial one, which, in turn, finally turns into
the isotropic phase.
In some cases, the transition takes place directly from the biaxial nematic to the isotropic phase.
In the former cases, the ratio between the two transition temperatures (biaxial-to-uniaxial and uniaxial-to-isotropic) often turns out to be rather
small in comparison with experimentally known stability ranges of the nematic phase.
Both the isotropic-to-biaxial and uniaxial-to-biaxial phase transitions can be either first- or second-order, and, accordingly, the phase diagram exhibits
{\emph{triple}} and {\emph{tricritical points}}.
However, in the low temperature range, other phases, such as smectic or solid ones, may become more likely to occur.
On the other hand, most theoretical frameworks only allow for isotropic and nematic phases \cite{tschierske2010}, being the positional order
not accounted for.
Over the years, a rather simple, continuous, biaxial mesogenic pair interaction model has been proposed and investigated by several authors
and via several types of techniques. In the literature, this model is known as the {\emph{generalised Straley interaction}} \cite{gdmromanobisipre2013c} and
it finds its roots in the celebrated Maier-Saupe model for interacting uniaxial nematic molecules \cite{maiersaupe1958,maiersaupe1959, straley1974}.
Actually, over the last two decades, several properties of this model have emerged, such as possible simplifications, additional symmetries and versatility in
applications. 
More precisely, in 2003, new experimental findings on biaxial nematics boosted a renewed theoretical interest by some authors \cite{svd2003, gdmv2005, gdmromanov2005, universal2006, cmt2007, romano2004333}. 
More precisely, the generalised Straley pair potential model was studied by mean-field, as well as Monte Carlo simulation in the simple-cubic lattice-model version
and, correspondingly, the effects produced on the resulting macroscopic behaviour were analysed \cite{romano2004, pisapv08, gdmromanov2005,universal2006, romano2004333, gartlandv2010}.
Moreover, motivated by the new experimental facts, the single-tensor Landau-de Gennes theory of biaxial nematics has been
carefully revisited and a double-tensor Landau theory was put forward and studied \cite{longalandau, gdmcmt2008landau}. The hidden link between mean-field and Landau-de Gennes-type treatments
has also been studied  \cite{sluckin2012,turzi2012,dispersion_london3}.
The Straley potential model involves three independent parameters, and the aforementioned studies have shown that the model is rather
versatile and capable of producing both biaxial and purely uniaxial order.
In addition, the effect of strong antinematic terms, i.e. terms  promoting misalignment, in the pair potential onto the resulting orientational order has been investigated \cite{pisapv08, gdmromano2009}.
As shown in \cite{gdmromano2009, bisigdmromano2012}, these antinematic terms in the Straley model may destroy biaxiality, producing only uniaxial orientational order, and in some cases show evidence of  the existence of
a continuous ordering transition, in contrast with the discontinuous phase transition predicted by the simple Maier-Saupe model.
Moreover, in \cite{gdmromano2011p}, the Straley potential only contains antinematic terms, and it is found to produce biaxial order via a mechanism of order by disorder.
In \cite{gdmromanobisipre2013c} the authors investigated the effect of two predominant antinematic couplings of equal strength perturbed by a comparatively weaker
calamitic one. The resulting phases are a pure calamitic uniaxial phase, accompanied by an intermediate antinematic uniaxial phase.\\
In this work, we consider a {\emph{discrete}} version of the celebrated Maier-Saupe model for nematic LCs as the one considered in \cite{dgm nematics} and study its {\it biaxial} generalisation, i.e.  Straley model, further extended to account for the effects of external fields. More specifically, molecules are assumed to be rigid cuboids, with two individual orientational degrees of freedom associated to two of the three principal axes of inertia, as the position of the third axis is automatically determined.    It is also assumed that homologous principal axes of inertia interact pairwise for any pairs of molecules in the system. This assumption specifically characterises the mean-field models, where indeed any pair of molecules equally interact independently of their distance, and therefore positional degrees of freedom are not relevant. A further assumption is that orientational degrees of freedom are discrete, namely principal axes can only be parallel to the directions of a pre-defined Cartesian reference frame. The discretisation of orientational degrees of freedom for nematic liquid crystal models was firstly introduced by  Zwanzig in \cite{Zwanzig} and successfully employed in various works, including recent papers \cite{dgm nematics,Nascimento}. Although this assumption may seem to be at a glance restrictive, it captures, as observed in \cite{dgm nematics}, with strikingly accuracy, properties of the continuum model. We show, via explicit examples, that the predictions obtained under specific symmetry reductions are consistent with the ones present in the literature for the corresponding continuum models.  \\
It is also important to note that, although, on one hand, the above assumptions restrict the model and allow to derive explicit global equations of the thermodynamic order parameters, on the other hand, the model is more general than its continuum analogues and the methodology adopted naturally incorporates external fields interacting with each orientational degree of freedom.  Therefore, to the best of our knowledge,  we provide the first theoretical study on the equilibrium statistical mechanics of  a molecular field theory for biaxial liquid crystals subject to external fields. \\
To solve the model, we show that the partition function $Z_{N}$ of the $N-$molecules reduced Straley biaxial model with external fields satisfies a remarkable differential identity as a function of the temperature and coupling constants. Using suitably re-scaled independent variables, the differential identity for the partition function of the finite size model is equivalent, up to a linear change of variables, to the heat equation. The required solution is therefore obtained by solving a linear equation with a specific initial condition that is fixed by the value of the partition function for the non interacting model the solution of which is straightforward. The properties of the system in the thermodynamic regime are obtained by studying the behaviour of the free energy
\[
{\cal F}_{N} := \frac{1}{N} \log Z_{N}
\]
in the limit as $N \to \infty$, which corresponds to the {\it semi-classical}, or low diffusion, limit, of the heat equation, via a suitable asymptotic expansion of the free energy in powers of $N^{-1}$. At the leading order, the problem is solved via a Hamilton-Jacobi equation, which can be explicitly integrated and the solution is given in terms of the orientational order parameters from which the equations of state follow as a stationary point for the free energy functional. We study in detail the solution of the Hamilton-Jacobi equation and, specifically, a related system of quasilinear PDEs for the orientational order parameters.\\
The methodology based on differential identities is applied for the first time in the context of a molecular theory for biaxial nematics described by two tensor fields. We derive explicit expressions for state functions when a finite number of molecules $N$ is considered, as well as a novel system of equations of state, which include the interaction with external fields.   We rigorously classify all admissible  reductions in the absence of external fields, revealing  a rich singularity structure describing  transitions between isotropic, uniaxial and biaxial phases. A comparison  with results available in the literature shows that our findings  are consistent with those  obtained with different methods and techniques. \\
As pointed out in a number of papers~\cite{BrankovZagrebnov,ChoquardWagner,GenoveseBarra,DeNittis,AntonioAnnals,BarraMoro,Landolfi,dgm nematics,lmPotts,gvsym,bmps Pstar}  the nature of the PDEs derived for the orientational order parameters suggests a natural interpretation of the singularities as classical shocks propagating in the space of thermodynamic variables. This allows to explain and, qualitatively, predict some features of the phase diagram based on the general properties of shock waves, as for example the occurrence of {\emph{tricritical}} points as a collision and merging mechanism of two shock waves. This example demonstrates how such an  interpretation is at the same time intriguing and of practical use. \\
The paper is organised as follows.
In Section \ref{sec:model} we introduce the physical model under study, we derive differential identities for the statistical partition function and we provide exact solutions for the model in the finite-size regime. In Section \ref{sec:thermodynamic limit and eos}, we perform the thermodynamic limit and derive exact equations of state for the full model.  Two-parameter reductions are also obtained  in the cases of i) zero fields and ii) non-zero fields under special constraints. In Section~\ref{sec:order parameters} we present the phase diagram of the model in  absence of external fields, and discuss   criticality and behavior of the corresponding order parameters. Section~\ref{sec:conclusion} is devoted to concluding remarks. 
    

\section{The discrete $\lambda$-model for  biaxial nematics}
\label{sec:model}
Let us consider a system of $N$ interacting Liquid Crystals molecules with $D_{2h}$ symmetry, whose molecular directors $\vec{m}$, $\vec{e}$ and $\vec{e}_{\perp}$ are mutually orthogonal unit vectors parallel to their principal axes. The orientational state of a given molecule is identified by the directions of its molecular axes. Introducing the tensors (see e.g. \cite{rPV01}) 
\begin{equation}
{\bf q} = \vec{m} \otimes \vec{m} - \frac{1}{3} { \bf I} \qquad ,\qquad {\bf b} = \vec{e} \otimes \vec{e} - \vec{e}_{\perp} \otimes \vec{e}_{\perp}  
\end{equation}
where ${\bf I}$ is the $3 \times 3$ identity matrix, we consider the Hamiltonian of the form 
\begin{equation}
\label{eq:H0}
H_{0} =  -\frac{ \mu}{2N} \sum_{i,j} \left({\bf q}_{i}  \cdot {\bf q}_{j} + \lambda \, {\bf b}_{i}  \cdot {\bf b}_{j} \right)\,,
\end{equation}
where  ${\bf q}_i$ and ${\bf b}_i$ specify the orientational  state of the $i-$th molecule and the scalar product is  $ {\bf a} \cdot {\bf b} := \Tr{({\bf a b})}$, where $\Tr{}$ is the trace operator. Summation indices $i$ and $j$ run from $1$ to $N$,  $\mu$ is  the  non-negative mean-field coupling constant and $\lambda$ is a parameter weighing the degree of biaxiality.
	 In the present paper, we assume $\lambda \in \left[0,1\right]$.
	In this range, the ground state for two interacting molecules corresponds to parallel homologous axes, that is $\bm{e}_i$ tend to line up with $\bm{e}_j$, $\bm{m}_i$ with $\bm{m}_j$
	and $\bm{e}_{\perp, i}$ with $\bm{e}_{\perp,j}$. 
	When $\lambda=0$, the above Hamiltonian reduces to the classical Maier-Saupe model.  
	The specific choice $\lambda=1/3$ corresponds to the MMM model for liquid crystals with equally nematic interaction among corresponding molecular axes  \cite{gdmromanov2005}, i.e.
	\begin{equation}
	H_1 =  -\frac{ \mu}{2N} \sum_{i,j} \left({\bf q}_{i}  \cdot {\bf q}_{j} + \frac{1}{3} \, {\bf b}_{i}  \cdot {\bf b}_{j} \right)\,= -\frac{ \mu}{2N} \frac{2}{3}\sum_{i,j}\left[(\bm{m}_i  \cdot \bm{m}_j)^2 + (\bm{e}_{i} \cdot \bm {e}_{j})^2+
	(\bm{e}_{\perp , i}  \cdot \bm{e}_{\perp , j})^2-\frac{1}{2} \right]\,.
	\end{equation}

For convenience, we have included self-interaction terms corresponding to $i=j$. This choice will not affect the result as it corresponds to a shift of the energy reference frame by a constant. 

The Hamiltonian \eqref{eq:H0} corresponds to a   Straley pair-interaction potential  reduced to the case of explicit zero-coupling between  ${\bf q}$ and ${\bf b}$ tensors \cite{straley1974,svd2003}.   Such pair-potential, identifying the so-called  \emph{$\lambda-$model}, has been studied extensively and its associated  phase  diagram,   in the absence of external fields,  has been inferred for specific two-order parameter reductions \cite{svd2003,gdmv2005,gdmromanov2005}. It is worth noticing that, although the two tensors  ${\bf q}$ and ${\bf b}$ are not directly coupled in the Hamiltonian  \eqref{eq:H0},  they are geometrically  related via the constraint ${\bf q}_i \cdot{\bf b}_i=0$, thus implying an implicit microscopic coupling. As a result, the macroscopic behaviour will eventually reflect this hidden coupling through an entropic contribution in the free energy, in addition to other possible coupling terms in the order tensors, as we also show in this work.

Assuming that allowed configurations are such that molecular directors are parallel to the axes of a fixed Cartesian reference frame, the Hamiltonian (\ref{eq:H0}) can be written as follows
\begin{align}
\notag
H_0&=
-\frac{\mu}{2 N} \sum_{i,j} \sum_{l,k\in \{1,2\}} c_{kl} \left( \Lambda_{i}^{l}\Lambda_{j}^{k}+\lambda \, \Lambda_{i}^{l+2}\Lambda_{j}^{k+2} \right)\, ,
\end{align}
where $c_{kl}=1+\delta_{kl}$ for $k,l=1,2$, and $\Lambda_i^l$, with $i=1,\cdots,N$, and $l=1,2,3,4$, parametrise the components of ${\bf q}_{i}$ and ${\bf b}_{i}$ as follows
\begin{equation}
\label{MS6defintionOmega}
{\bf q}_{i} = \diag(\Lambda^{1}_i,\Lambda^{2}_i, -\Lambda^{1}_i-\Lambda^{2}_i) \qquad , \qquad {\bf b}_{i} = \diag(\Lambda^{3}_i,\Lambda^{4}_i, -\Lambda^{3}_i-\Lambda^{4}_i)
\end{equation}
 giving six possible orientational states of each molecule. 
In particular, we have that for the $i-$th molecule, $\Lambda_i=\left(\Lambda_i^1,\Lambda_i^2,\Lambda_i^3,\Lambda_i^4 \right)\in \{ \Lambda^{(1)}, \Lambda^{(2)},\cdots,\Lambda^{(6)} \}$, where 
\begin{subequations}
\begin{align} 
	\label{eq:quadruples1}
	\Lambda^{(1)}&=\left( \frac{2}{3},-\frac{1}{3},0,-1 \right) \quad  \quad \Lambda^{(2)}=\left( \frac{2}{3},-\frac{1}{3},0,1 \right)\qquad \, \, \,  \quad \Lambda^{(3)}=\left( -\frac{1}{3},\frac{2}{3},1,0 \right)  \\
	\label{eq:quadruples2}
	\Lambda^{(4)}&=\left( -\frac{1}{3},\frac{2}{3},-1,0 \right)\quad \quad \Lambda^{(5)}=\left( -\frac{1}{3},-\frac{1}{3},-1,1\right) \quad \quad \Lambda^{(6)}=\left( -\frac{1}{3},-\frac{1}{3},1,-1\right)\,.
\end{align}
\end{subequations}
Upon introducing the quantities $M^l= \sum_{i} \Lambda_{i}^{l}/N$ with $l=1,2,3,4$, the Hamiltonian $H_{0}$ reads as follows
\begin{equation}
\label{ms6hamiltonian}
H_0= -\mu N \left[ (M^{1}) ^{2} + M^1 M^2 + (M^{2})^{2} + \lambda \left( (M^3) ^{2} + M^3 M^4 + (M^{4})^{2} \right) \right].
\end{equation}
We now proceed with modelling the interaction between the liquid crystal and external fields. Consistently with previous studies on  uniaxial \cite{Wojtowicz, Rosenblatt,Frisken,dunmur88,Mukherjee2013} and biaxial nematics \cite{longa_external,dgm nematics}, we assume that the interaction between an individual biaxial liquid crystal molecule and external fields produces a term that is linear in the molecular tensors. 

Let   ${\boldsymbol \epsilon}= \diag \left (\epsilon_1, \epsilon_2,\epsilon_3 \right)$ and ${\boldsymbol \chi}= \diag \left (\chi_1, \chi_2,\chi_3 \right)$ be two tensors associated with a general external field and   let $H_{{\text ex}}$ be the Hamiltonian modelling the interaction between the external field and the liquid crystal molecules. Our assumption implies that  $H_{ {\text ex}}$ is of the form 
\begin{align} 
H_{{\text ex}}&= -  \sum_{i} \left( \boldsymbol{\epsilon} \cdot {\bf q}_{i}+ \boldsymbol{\chi} \cdot {\bf b}_{i} \right) \\
&=-N \left[ (\epsilon_1 -\epsilon_3 )M^1 + (\epsilon_2 -\epsilon_3 ) M^2 +(\chi_1 -\chi_3 )M^3 + (\chi_2 -\chi_3 )M^4 \right] \,.
\end{align}
By introducing the notation $\epsilon_{k3}=\epsilon_k -\epsilon_3$ and $\chi_{k3}=\chi_k -\chi_3$ with $k=1,2$, we can write\footnote{The parameters $\epsilon_{k,j}$ and $\chi_{k,j}$ can be thought as functions of the external field applied and the properties of the material, e.g. the components of the  magnetic and electric susceptibilities (see for instance  \cite{Wojtowicz, Rosenblatt,Mukherjee2013}).}
\begin{equation}
\label{eq:Hex}
H_{{\text ex}} = - N \left( \epsilon_{13} M^1 + \epsilon_{23} M^2+ \chi_{13} M^3 +\chi_{23} M^4 \right)\,.
\end{equation}
Hence, the full Hamiltonian for the mean-field model under study in this work   is $H = H_{0} + H_{{\text ex}}$. The associated partition function for the Gibbs distribution is given by the expression
\[
Z_{N} = \sum_{\{ ({\bf q},{\bf b}) \}} \exp (-\beta H),
\]
where the summation refers to all possible configurations of  $({\bf q}_i,{\bf b}_i)$ and $\beta = 1/T$ with $T$ denoting the absolute temperature. Upon introducing the rescaled coupling constants $t:= \beta \mu$, $x :=  \beta \epsilon_{13}$, $y :=  \beta \epsilon_{23}$, $z :=  \beta \chi_{13}$ and $w :=  \beta \chi_{23}$,  the partition function reads as
\begin{equation}
\label{eq:partition}
Z_{N} = \sum_{\{ ({\bf q},{\bf b}) \}} e^{N \left \{ t \left[(M^1)^2 + M^1 M^2 + (M^2)^2   + \lambda \left( (M^3) ^{2} + M^3 M^4 + (M^{4})^{2} \right) \right] + x M^1 + y M^2+ z M^3 + w M^4 \right \}} ~.
\end{equation}
In the following, similarly to the case of  van der Waals type models \cite{DeNittis,BarraMoro}, spin systems \cite{GuerraProcA} and the  generalisation of the Maier-Saupe model in \cite{dgm nematics}, we look for a differential identity satisfied by the partition function and calculate the associated initial condition. 
We observe that the partition function~(\ref{eq:partition}) satisfies the $(4+1)$-dimensional linear PDE
\begin{equation}
\label{eq:heat}
\der{Z_{N}}{t} = \frac{1}{N} \left [ \dersec{Z_{N}}{x} + \dermixd{Z_{N}}{x}{y} + \dersec{Z_{N}}{y} + \lambda \left(\dersec{Z_{N}}{z} + \dermixd{Z_{N}}{z}{w} + \dersec{Z_{N}}{w} \right) \right].
\end{equation}
Note that, for $\lambda>0$ , equation (\ref{eq:heat}) can be transformed via a linear transformation of the spatial coordinates into the heat equation
\begin{equation}
	\notag
	\der{Z_{N}}{t} = \sigma \left (\dersec{Z_{N}}{x'} + \dersec{Z_{N}}{y'} + \dersec{Z_{N}}{z'} + \dersec{Z_{N}}{w'} \right) \,,
\end{equation}
 where $x'$,$y'$, $z'$, $w'$ denote the new coordinates and $\sigma=1/N$ is the analogue of the heat conductivity. More precisely, the transformation of coordinates is given by  ${\bf{u} '}=P_\lambda \bf{u }$, where 
 \[{\bf{u} '}=(x',y',z',w')^T,  \, \, {\bf{u}}=(x,y,z,w)^T \, \, \text{and} \, \,   P_\lambda=
 \begin{pmatrix}
	2 & -2 & 0 & 0 \\
	2/\sqrt{3} & 2/\sqrt{3} &0& 0 \\
	0 & 0 & 2/\sqrt{\lambda} & -2/\sqrt{\lambda}\\
	0 & 0 & 2/\sqrt{3\lambda} &  2/\sqrt{3\lambda}
	\end{pmatrix} \,.
  \]
The associated initial condition, $Z_{0,N}(x,y,z,w):=Z_{N}(x,y,z,w,t=0)$, corresponds to the value of the partition function of the model for non mutual interacting molecules.
Given that the exponential is linear in the variables $M^1$, $M^2$, $M^3$ and $M^4$, the initial condition can be evaluated by recursion and gives the following formula
\begin{equation}
\label{eq:partition0}
Z_{0,N} = \left( \sum_{i=1}^{6} e^{x\Lambda^{1,i} + y \Lambda^{2,i}+z\Lambda^{3,i} + w \Lambda^{4,i}} \right)^N,
\end{equation} 
where the index $i$ labels the quadruples $\Lambda^{(i)}=\left( \Lambda^{1,i},\Lambda^{2,i},\Lambda^{3,i},\Lambda^{4,i}\right)$ defined in Eqs. (\ref{eq:quadruples1})-(\ref{eq:quadruples2}).

The exact solution to the equation~(\ref{eq:heat}) for a given number of molecules $N$ can be  formally obtained by separation of variables using as a basis the set of exponential functions obtained by expanding the $N-$th power at the r.h.s. of equation (\ref{eq:partition0}). The solution reads as
\begin{equation}
\label{finiteNZ}
Z_N = \sum_{\{\vec{k} \}} B_{\vec{k}} \; A_{\vec{k}}(t;\lambda)  \exp \left( x \, \omega^{1}_{\vec{k}} + y \, \omega^{2}_{\vec{k}} +z \, \omega^{3}_{\vec{k}} + w \, \omega^{4}_{\vec{k}} \right),
\end{equation}
where $\vec{k} = (k_{1},\dots,k_6)$ is a multi-index such that $k_i = 0,\dots,N_i$ with $N_1=N$, $N_i = N_{i-1}-k_{i-1}$ for $i=2,\dots,5$, $k_6 = N -\sum_{i=1}^{5} k_i$, $\omega^{l}_{\vec{k}}  =\sum_{i=1}^{6}\Lambda^{l,i} k_i $, $l=1,2,3,4$ and
\[
B_{\vec{k}} =\prod_{i=1}^6 \binom{N_i}{k_i}, \quad A_{\vec{k}} = \exp \left \{ \frac{t}{N} \left[  \left(\omega^{1}_{\vec{k}} \right)^2  +\omega^{1}_{\vec{k}} \omega^{2}_{\vec{k}} +\left(\omega^{2}_{\vec{k}} \right)^2  +\lambda \left( \left(\omega^{3}_{\vec{k}} \right)^2  +\omega^{3}_{\vec{k}} \omega^{4}_{\vec{k}} +\left(\omega^{4}_{\vec{k}} \right)^2 \right)  \right]   \right \}.
\]
Let us define the scalar order parameters  $m_{N}^1$, $m_{N}^{2}$, $m_{N}^3$ and $m_{N}^{4}$ as the expectation values of, respectively,  $M^1$, $M^2$, $M^3$ and $M^4$, i.e.
\begin{equation}
m^l_N := \av{M^l} = \frac{1}{Z_N} \sum_{\{ ({\bf q},{\bf b}) \}} M^l e^{-\beta H}, \qquad l=1,2,3,4.
\end{equation}
Upon introducing the free-energy density as $
{\cal F}_{N}:= (1/N)\log Z_N$,
the order parameters for an intrinsically biaxial system composed by $N$ molecules can be calculated by direct differentiation as follows
\begin{equation}
\label{eq:orderpar}
m^1_N= \der{{\cal F}_{N}}{x}\quad , \quad  m^2_N = \der{{\cal F}_{N}}{y} \quad , \quad  m^3_N = \der{{\cal F}_{N}}{z} \quad , \quad  m^4_N = \der{{\cal F}_{N}}{w} \, ,
\end{equation}
where $m^l_N=m^l_N (x,y,z,w,t; \lambda)$, for $l=1,2,3,4$.
Equation~(\ref{eq:heat}) implies that the free-energy density satisfies the following differential identity
\begin{align}
\notag
\der{{\cal F}_{N}}{t} &=    \left(\der{{\cal F}_{N}}{x}\right)^2+\der{{\cal F}_{N}}{x}\der{{\cal F}_{N}}{y}+\left(\der{{\cal F}_{N}}{y}\right)^2+ \lambda \left[
 \left(\der{{\cal F}_{N}}{z}\right)^2+\der{{\cal F}_{N}}{z}\der{{\cal F}_{N}}{w}+\left(\der{{\cal F}_{N}}{w}\right)^2\right]  \\
\label{eq:pdeFN}
&+\frac{1}{N} \left[ \dersec{{\cal F}_{N}}{x} + \dermixd{{\cal F}_{N}}{x}{y} + \dersec{{\cal F}_{N}}{y} + \lambda \left(\dersec{{\cal F}_{N}}{z} + \dermixd{{\cal F}_{N}}{z}{w} + \dersec{{\cal F}_{N}}{w} \right)\right] .
\end{align}
In Section~\ref{sec:thermodynamic limit and eos}, we derive the equations of state in the thermodynamic (large $N$) regime via a direct asymptotic approximation of the solution to equation~(\ref{eq:pdeFN}). Before proceeding, it is worth to emphasise that the case $\lambda=0$ implies a reduction of the model (\ref{eq:pdeFN}) to the one studied in~\cite{dgm nematics}, although the initial condition considered in that work depends on the intrinsic molecular biaxiality parameter $\Delta$, differently from the present case in which the degree of biaxiality of the interaction is entirely contained in the internal energy term. The  differences in the two treatments arise as  in this paper we are working with two order tensors, while in~\cite{dgm nematics} the so-called geometric approximation on the interaction potential allowed to work with a single order tensor, that is a linear combination of ${\bf q}$ and ${\bf b}$.


\section{Thermodynamic limit and equations of state}
\label{sec:thermodynamic limit and eos}
The thermodynamic limit is defined as the regime where the number of particles $N$ is large, i.e. $N \to \infty$. Under the assumption  that the free-energy admits the expansion of the form ${\cal F}_{N} = F + O \left (1/N \right)$ and by using  Eq.~(\ref{eq:pdeFN}) we obtain, at the leading order, the following Hamilton-Jacobi type equation
\begin{equation}
\label{eq:HJF}
\der{F}{t} =   \left( \der{F}{x} \right)^{2} +  \der{F}{x} \der{F}{y} + \left(\der{F}{y} \right)^{2} + \lambda \left[ \left( \der{F}{z} \right)^{2} +  \der{F}{z} \der{F}{w} + \left(\der{F}{w} \right)^{2} \right].
\end{equation}
A similar asymptotic expansion for the order parameters $m^l_N = m^l + O(1/N)$ implies the relations 
\[
m^1 = \der{F}{x} \quad, \quad m^2 = \der{F}{y} \quad,  \quad  m^3 = \der{F}{z}\quad, \quad m^4 = \der{F}{w}\,.
\]
Equation~\eqref{eq:HJF} is completely integrable and can be solved via the method of  characteristics. In particular, the solution can be expressed via the free-energy functional
\begin{align}
\notag
F &= x m^1 \!+ y m^2\!+z m^3 \!+w m^4 \! \\
\label{eq:Fsol}
&+t \left[(m^1)^2 \! + m^1 m^2\! +(m^2)^2\! +\lambda \left( (m^3)^2 \!+ m^3 m^4 \!+(m^4)^2\right)\right] \\ 
\notag
&+ S(m^1,m^2,m^3,m^4)\,,
\end{align}
where $m^1$, $m^2$, $m^3$ and $m^4$ are stationary points of the free-energy, i.e. 
\[
\der{F}{m^l} = 0 \quad  \text {for } l=1,2,3,4\,.
\]
Equivalently,  order parameters are solutions to the following system of equations
\begin{gather}
\begin{aligned}
\Psi_1:=x + (2 m^1 + m^2) t +  \der{S}{m^1}&=0, \qquad  \Psi_2:= y + (m^1 +2 m^2) t+ \der{S}{m^2}=0\,,\\
\label{eq:ESgen}
\Psi_3:=z + (2 m^3 + m^4)\lambda t +\der{S}{m^3} &= 0, \quad \, \,\, \, \,\Psi_4:= w +  (m^3 +2 m^4)  \lambda t +\der{S}{m^4}=0\,.
\end{aligned}
\end{gather}
The term $S(m^1,m^2,m^3,m^4)$ represents the entropy of the system and, as discussed below, is uniquely fixed via the initial condition $F_0 =F(x,y,z,w,t=0)$.

The system (\ref{eq:ESgen}) represents the set of equations of state for the $\lambda$-model. Hence, phase transitions can be studied through the analysis of critical points of the equations (\ref{eq:ESgen}).
Similarly to the thermodynamic models studied in~\cite{DeNittis,AntonioAnnals,BarraMoro,gvsym}, order parameters $m^l$ can be viewed as solutions to a nonlinear integrable system of hydrodynamic type where coupling constants $x$, $y$, $z$, $w$ and $t$ play the role of, respectively, space and time variables. In this framework, state curves within the critical region of a phase transition are the analog of shock waves of the hydrodynamic flow.
In order to  specify completely  equations of state (\ref{eq:ESgen}) we have to determine the function $S(m^1,m^2,m^3,m^4)$. We proceed by evaluating Eqs.~(\ref{eq:ESgen}) at $t=0$, that is
\begin{align}
\notag
x(m^{1}_{0},m^{2}_{0},m^{3}_{0},m^{4}_{0})  &= -\left . \der{S}{m^1} \right |_{m^l =m^l_0}, \quad y(m^{1}_{0},m^{2}_{0},m^{3}_{0},m^{4}_{0}) = -\left . \der{S}{m^2} \right |_{m^l =m^l_0},\\
\label{eq:ESgen0}
z(m^{1}_{0},m^{2}_{0},m^{3}_{0},m^{4}_{0})  &= -\left . \der{S}{m^3} \right |_{m^l =m^l_0}, \quad w(m^{1}_{0},m^{2}_{0},m^{3}_{0},m^{4}_{0}) = -\left . \der{S}{m^4} \right |_{m^l =m^l_0},
\end{align}
where $m^l_0 = m^l(x,y,z,w,t=0)$, with $l=1,2,3,4$. Equations (\ref{eq:ESgen0}) show that the function $S(m^1,m^2,m^3,m^4)$ can be obtained, locally, by expressing $x$, $y$, $z$ and $w$ as functions of the order parameters $m^l$ evaluated at $t=0$ and then integrating Eqs.~(\ref{eq:ESgen0}). Indeed, observing that the initial condition for $F$ is $F_0 = {\cal F}_{N,0} = (1/N) \log Z_{0,N}$, where $Z_{0,N}$ is given in (\ref{eq:partition0}), the required functions can be obtained by inverting the system
\begin{equation}
\label{minvert}
m^1_0 = \der{F_0}{x}(x,y,z,w) \, , \, m^2_0 = \der{F_0}{y}(x,y,z,w)\, , \, m^3_0 = \der{F_0}{z}(x,y,z,w)\, , \, m^4_0 = \der{F_0}{w}(x,y,z,w)\,.
\end{equation}
More explicitly, equations (\ref{minvert}) read as follows
\begin{equation}
\label{roots}
\sum_{i=1}^{6} \left(m^{l}_0 - \Lambda^{l,i} \right) X^{\Lambda^{1,i}} Y^{\Lambda^{2,i}}Z^{\Lambda^{3,i}} W^{\Lambda^{4,i}} = 0 \, \, \, ,\,l=1,2,3,4\,,
\end{equation}
where we have introduced the notation $X = \exp (x)$, $Y = \exp( y)$, $Z = \exp( z)$, $W = \exp( w)$. Hence, equations of state~(\ref{eq:ESgen}) for the  model with external fields are completely determined in terms of the roots of  system of equations~(\ref{roots}). We should also emphasise that  system (\ref{roots}) is algebraic with respect to the variables $X$, $Y$, $Z$ and $W$.

{\bf Remark.} The order parameters introduced here are related to the scalar order parameters adopted in \cite{svd2003} by the following linear transformation
\begin{equation}
	\label{eq:orderParametersMap}
	m^1=T-S/3\,,\quad m^2=-T-S/3\,,\quad m^3=T'-S'/3\,,\quad m^4=-T'-S'/3\,,
\end{equation}
where $S,T,S',T'$ are the scalar order parameters characterising the tensors ${\bf Q}:=\av{{\bf q}}$ and ${\bf B}:=\av{{\bf b}}$ in their common eigenframe, once the thermodynamic limit is performed. Specifically, by considering  the eigenframe $(\vec{e}_x,\vec{e}_y,\vec{e}_z)$, the order tensors can be written as
\begin{align}
	{\bf Q}&= S \left(\vec{e}_z \otimes \vec{e}_z - \frac{1}{3} { \bf I} \right) + T \left(\vec{e}_x \otimes \vec{e}_x - \vec{e}_{y} \otimes \vec{e}_{y}\right)\\
	{\bf B}&= S' \left(\vec{e}_z \otimes \vec{e}_z - \frac{1}{3} { \bf I} \right) + T' \left(\vec{e}_x \otimes \vec{e}_x - \vec{e}_{y} \otimes \vec{e}_{y}\right)\,.
\end{align}
The inverse of the linear transformation \eqref{eq:orderParametersMap} is 
\begin{equation}
	\label{eq:orderParametersMapInverse}
	S=-\frac{3}{2}(m^1 + m^2)\,,\quad T=\frac{1}{2}(m^1 - m^2)\,,\quad S'=-\frac{3}{2}(m^3 + m^4)\,,\quad T'=\frac{1}{2}(m^3 - m^4)\,.
\end{equation}
In \cite{gdmromanov2005} it is claimed that, in the absence of external fields, reductions $T=S'=0$, or  $T=\pm S$ and $S'=\pm 3 T'$ hold, these latter obtained  by swapping the axes of the reference frame, $\vec{e}_x,\vec{e}_y,\vec{e}_z$. These conditions read as $m^1=m^2=-S/3$ and $m^3=-m^4=T'$.

In the next section, we will introduce a new parametrisation  based on the introduction of the molecular Gibbs weights, which leads to the explicit solutions of the model.

\subsection{Equations of state}
A convenient approach to the evaluation of the entropy of the discrete model and the corresponding  equations of state starts from the statistical analysis of the `initial condition', namely the evaluation of the partition function~\eqref{eq:partition0} as a function of the external fields at $t=0$. Indeed, at $t=0$, liquid crystal molecules are mutually independent and  expectation values can be evaluated  by looking at the one-molecule partition function, 
\begin{equation}
	\label{eq:Z01}
	Z_{0,1} =  \sum_{i=1}^{6} e^{x\Lambda^{1,i} + y \Lambda^{2,i}+z\Lambda^{3,i} + w \Lambda^{4,i}} \,.
	\end{equation}
The molecular Gibbs weights \cite{Callen} at $t=0$ and as functions of the external fields take the following form 
\[ p_{0,i} (x,y,z,w):=  \frac{e^{x\Lambda^{1,i} + y \Lambda^{2,i}+z\Lambda^{3,i}+ w \Lambda^{4,i}} }{Z_{0,1}}  \quad, \quad  i=1, \dots, 6\,.\]
Notice that the partition function (\ref{eq:Z01}), ensures that the Gibbs weights fulfil the standard normalisation condition,
\begin{equation}
	\label{eq:normalisation Gibbs}
	\sum_{i=1}^{6} p_{0,i}=1 \,.
\end{equation} 
The configurational entropy of the model is standardly given by $S=-\sum_{k=1}^{6} p_k \log p_k$. At $t=0$, this reads, $S_0=-\sum_{k=1}^{6} p_{0,k} \log p_{0,k}$.    By inspection, the following  holds at $t=0$

\begin{equation}
	\label{eq:gibbs pure}
	x= \frac{1}{2} \log \frac{p_{0,1} \, p_{0,2}}{p_{0,5} \,p_{0,6}}  \quad,\quad y= \frac{1}{2} \log \frac{p_{0,3} \, p_{0,4}}{p_{0,5} \, p_{0,6}}  \quad,\quad z=\frac{1}{2} \log \frac{p_{0,3}}{p_{0,4}}  \quad,\quad w=\frac{1}{2} \log \frac{p_{0,2}}{p_{0,1}} \,.
\end{equation}
In the specific case of the model under study, one can verify  that only four out of six Gibbs weights are functionally  independent. Indeed, additionally to the normalisation constraint (\ref{eq:normalisation Gibbs}), one can readily verify the following
\begin{equation}
	\label{eq:Gibbs constraint}
	\prod_{k=1}^{3} p_{0,2k-1} = \prod_{k=1}^{3}p_{0,2k} \,.
\end{equation}
By using Eqs.~(\ref{eq:normalisation Gibbs}) and (\ref{eq:Gibbs constraint}) one can express $p_{0,6}$ and $p_{0,5}$ in terms of $p_{0,1}$, $p_{0,2}$, $p_{0,3}$ and $p_{0,4}$ as follows 
\begin{equation}
	\label{eq:w3 w5}
	p_{0,5}=  \frac{p_{0,2} p_{0,4} (1-\sum_{i=1}^4p_{0,i})}{p_{0,1} \,p_{0,3} +p_{0,2} \, p_{0,4}}	 \quad , \quad  p_{0,6}=\frac{p_{0,1} p_{0,3} (1-\sum_{i=1}^4p_{0,i})}{p_{0,1} \,p_{0,3} +p_{0,2} \, p_{0,4}} \,.
\end{equation}
Note that the entropy density, as well as the Gibbs weights, depend on the temperature and the fields via the scalar order parameters only (see Eq.~(\ref{eq:Fsol})). Therefore, the identities in Eqs.~(\ref{eq:w3 w5}) hold at every $t$.
The Gibbs weights are related to the order parameters $m^l$ via the  transformation $\varphi: (p_1,p_2,p_3,p_4)\in[0,1]^4 \to (m^1,m^2,m^3,m^4) \in \mathcal{D} \subset  \mathbb{R}^4$
where

\begin{gather}
	\begin{aligned}
		m^1=& \,p_1+p_2-\frac{1}{3} \,\quad, \quad m^3=\frac{(p_1 \, p_3-p_2 \, p_4)(1-p_1-p_2)+2p_3\, p_4 (p_2-p_1)}{p_1 \,p_3+p_2 \, p_4}, \\
		\label{eq:transformation p m}
		m^2=& \, p_3+p_4-\frac{1}{3} \,\quad, \quad m^4= \frac{(p_2 \, p_4- p_1 \, p_3 )(1-p_3-p_4)+ 2p_1 \,p_2 (p_3-p_4) }{p_1 \,p_3+p_2 \, p_4} \,.
	\end{aligned}
\end{gather}
The domain  $\mathcal{D}$ is identified by the following constraints
\begin{align*}-2/3 \leq m^1+m^2 \leq 1/3 \, \,\, &, \,\, \, -( 2/3 +m^1+m^2) \leq m^1-m^2 \leq 2/3  +m^1+m^2 \\
	-2 \leq  m^3-m^4 \leq 2\, \,\, &, \,\, \, -( 2/3 +m^1+m^2) \leq m^3+m^4 \leq 2/3 +m^1+m^2\,.
\end{align*}
Using the relations (\ref{eq:gibbs pure}), (\ref{eq:w3 w5}) and (\ref{eq:transformation p m}), and the observation (\ref{eq:ESgen0}) one obtains the following set of equations for $p_1$, $p_2$, $p_3$ and $p_4$ in terms of the fields and the temperature

\begin{subequations}
\begin{align}
	\label{eq:eosp 1}
	x +(2 p_1 +2 p_2 + p_3 +p_4 -1) t - \frac{1}{2} \log \left( \frac{(p_1 p_3 +p_2 p_4)^2}{p_3 p_4 (1-p_1-p_2-p_3-p_4)^2}\right)&=0\\
	\label{eq:eosp 2}
	y +(2p_3 +2 p_4 + p_1 +p_2 -1) t - \frac{1}{2} \log \left( \frac{(p_1 p_3 +p_2 p_4)^2}{p_1 p_2 (1-p_1-p_2-p_3-p_4)^2}\right)&=0\\
	\label{eq:eosp 3}
	z +\left( \frac{ p_1 p_3 (1-2p_1 +p_3-3p_4 -p_2 p_4 (1-2 p_2-3\, p_3 +p_4))}{p_1 p_3 +p_2 p_4} \right) \lambda t - \frac{1}{2} \log \left( \frac{p_3}{p_4} \right)&=0\\
	\label{eq:eosp 4}
	w +\left( \frac{p_2 p_4 (1-3p_1 +p_2 -2 p_4) -p_1 p_3 (1+ p_1 -3 p_2- 2 p_3)}{p_1 p_3 +p_2 p_4} \right) \lambda t - \frac{1}{2} \log \left( \frac{p_2}{p_1} \right) &=0\,.
\end{align}
\end{subequations}
Equations~(\ref{eq:eosp 1})-(\ref{eq:eosp 4}) can be viewed as the equations of state of the discrete $\lambda$-model subject to external fields, parametrised by $p_i$ and  intensive thermodynamic variables $x$, $y$, $z$, $w$ and $t$, which are the  control parameters of the model. 
Notice that Eqs.~\eqref{eq:eosp 1}-\eqref{eq:eosp 4} are the critical points of the  free-energy which can now be given the form 
	\begin{equation}
		\label{eq: free energy sol explicit}
		F= \! x  \, m^1 \! +y \, m^2  \!+ z \, m^3  \! + w \, m^4+  \! \frac {t}{2} \left[ \Tr{{\bf{Q}}^2 }  + \lambda \Tr{ {\bf{B}}^2 } \right]\! - \! \sum_{k=1}^6 p_k \log p_k  \! \,,
		\end{equation}
	where ${\bf Q}=\diag (m^1, m^2, -m^1-m^2)$ and ${\bf B}=\diag (m^3, m^4, -m^3-m^4)$, and  $m^l=m^l (p_1, p_2, p_3, p_4)$, with $l=1,2,3,4$, and $ p_{5,6}=p_{5,6}(p_1, p_2, p_3, p_4)$ are given by 
		Eqs.~\eqref{eq:transformation p m} and Eqs.~\eqref{eq:w3 w5}, respectively.

\subsubsection{Two-parameter reductions}
In this subsection, we will focus on the derivation of two-parameter reductions of the equations of state (\ref{eq:eosp 1})-(\ref{eq:eosp 4}). Such reductions arise naturally when considering the liquid crystal system constrained to suitable forms of external fields, including the case in which external fields are not present and the phase behaviour is entirely regulated by the mutual interactions among liquid crystal molecules and the temperature. The following holds in the absence of external fields.

\begin{lemma}
	\label{thm: reduction in p}
	In the absence of external fields, that is at $x\!=\!y\!=\!z\!=\!w\!=\!0$, solutions to the system (\ref{eq:eosp 1})-(\ref{eq:eosp 4}) are given by  one of the following   $2$-parameter reductions:
	\begin{enumerate}[i)]
		\item $ $
		\begin{enumerate}[a)]
			\item$p_4=p_2$ and $p_3=p_1$, with
			\begin{align}
				\label{eq:eos p ub a red 1}
				&(1-3 \,  p_1-3 \, p_2)\,t = \frac{1}{2} \log \left( \frac{p_1 \, p_2 \, ( 1-2 \,  p_1 -2 \, p_2)^2 }{ \left( p_1^2 + p_2^2 \right)^2 } \right)\\
				\label{eq:eos p ub a red 2}
				&(p_1-p_2)\left( 1 + \frac{4 \, p_1 \, p_2 -p_1 -p_2}{p_1^2+p_2^2}\right) \lambda \, t= \frac{1}{2} \log \left( \frac{p_2}{p_1} \right) \,.
			\end{align}
			\item $p_3=\left[(1-2\, p_1-2 \, p_2)  p_2^{\,2} \right]/ (p_1^{\, 2}+p_2^{\, 2})$ and $p_4=\left[ (1-2\, p_1-2\,p_2)p_1^{\,2}\right] / (p_1^{\, 2}+p_2^{\, 2})$, where $p_{1}$ and $p_{2}$ satisfy Eqs.~(\ref{eq:eos p ub a red 1})-(\ref{eq:eos p ub a red 2}).
			\item $p_1=\left[ (1-2\, p_3-2\,p_4)p_4^{\,2}\right] / (p_3^{\, 2}+p_4^{\, 2})$ and $p_2=\left[ (1-2\, p_3-2 \, p_4)p_3^{\,2} \right] /(p_3^{\, 2}+p_4^{\, 2})$, where $p_{3}$ and $p_{4}$ satisfy
			\begin{align}
				\label{eq:eos p ub c red 1}
				&(1-3 \,  p_3-3 \, p_4)\,t = \frac{1}{2} \log \left( \frac{p_3 \, p_4 \, ( 1-2 \,  p_3 -2 \, p_4)^2 }{ \left( p_3^2 + p_4^2 \right)^2 } \right)\\
				\label{eq:eos p ub c red 2}
				&(p_4-p_3)\left( 1 + \frac{4 \, p_3 \, p_4 -p_3 -p_4}{p_3^2+p_4^2}\right) \lambda \, t= \frac{1}{2} \log \left( \frac{p_3}{p_4} \right) \,.
			\end{align}
		\end{enumerate}
		\item 	$ $
		\begin{enumerate}[a)] 
			\item $p_3=p_2$ and $p_4=p_1$,  with
			\begin{align}
				\label{eq:eos p uu red 1}
				(1-3 \,p_1- 3 \, p_2)\,t &= \frac{1}{2} \log \left( \frac{  \left( 1- 2 \, p_1-2 \, p_2 \right)^2}{ 4 \,p_1 \, p_2} \right)\\
				\label{eq:eos p uu red 2}
				3 \,(p_1-p_2)\,  \lambda \, t&= \frac{1}{2} \log \left( \frac{p_1}{p_2} \right) \,.
			\end{align}
			\item $p_3=p_4=1/2- p_1 - p_2$, with Eqs.~(\ref{eq:eos p uu red 1})-(\ref{eq:eos p uu red 2}) holding for $p_1$ and $p_2$.
			
			\item $p_2=p_1=1/2- p_3 - p_4$, with
			\begin{align}
				\label{eq:eos p uu c red 1}
				(1-3 \,p_3- 3 \, p_4)\,t &= \frac{1}{2} \log \left( \frac{  \left( 1- 2 \, p_3-2 \, p_4 \right)^2}{ 4 \,p_3 \, p_4} \right)\\
				\label{eq:eos p uu c red 2}
				3 \,(p_4-p_3)\,  \lambda \, t&= \frac{1}{2} \log \left( \frac{p_4}{p_3}  \right) \,.
			\end{align}
		\end{enumerate}
	\end{enumerate}
\end{lemma}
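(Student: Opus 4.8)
The plan is to fix the external fields to zero, $x=y=z=w=0$, and to classify all solutions of (\ref{eq:eosp 1})--(\ref{eq:eosp 4}) by a finite case analysis, after first recasting the system in a more transparent form. Writing $\Sigma:=p_1+p_2+p_3+p_4$ and using the two identities $p_5+p_6=1-\Sigma$ and $p_5p_6=p_1p_2p_3p_4(1-\Sigma)^2/(p_1p_3+p_2p_4)^2$, both immediate from (\ref{eq:w3 w5}) (which hold at all $t$), equations (\ref{eq:eosp 1})--(\ref{eq:eosp 2}) collapse, after cancelling common factors, to the self-consistency relations
\begin{equation*}
 g(p_1)+g(p_2)=g(p_3)+g(p_4)=g(p_5)+g(p_6),\qquad g(u):=\tfrac12\log u-t\,u .
\end{equation*}
Likewise, using the polynomial identity $N_3+N_4=3(p_1p_3+p_2p_4)(p_2+p_3-p_1-p_4)$, where $N_3,N_4$ denote the numerators of the bracketed fractions in (\ref{eq:eosp 3})--(\ref{eq:eosp 4}), the combination (\ref{eq:eosp 3})$+$(\ref{eq:eosp 4}) simplifies to $3(p_2+p_3-p_1-p_4)\lambda t=\tfrac12\log\frac{p_2p_3}{p_1p_4}$. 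The zero-field system is equivalent to these three relations together with (\ref{eq:eosp 3}) alone.

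The argument then branches according to whether $p_3=p_4$ and/or $p_1=p_2$ (the case $\lambda t=0$ being trivial). If $p_3=p_4$, the right-hand side of (\ref{eq:eosp 3}) vanishes, forcing $N_3=0$; a short computation gives $N_3\big|_{p_3=p_4}=p_3(p_1-p_2)(1-2p_1-2p_2-2p_3)$, so either $p_1=p_2$ or $p_3=p_4=\tfrac12-p_1-p_2$, the latter being exactly reduction ii.b). Symmetrically, $p_1=p_2$ with $p_3\neq p_4$ yields ii.c). In the remaining sub-branch $p_1=p_2=:a$, $p_3=p_4=:b$ one has ${\bf B}=0$; writing $c:=\tfrac12-a-b$ (which equals $p_5=p_6$), equations (\ref{eq:eosp 3})--(\ref{eq:eosp 4}) become identities while (\ref{eq:eosp 1})--(\ref{eq:eosp 2}) reduce to $2(a-c)t=\log(a/c)$ and $2(b-c)t=\log(b/c)$, so that $a$ and $b$ are both zeros of the strictly convex function $u\mapsto 2(u-c)t-\log(u/c)$. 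Such a function has at most two zeros and always vanishes at $u=c$, whence $a=b$ (a point of ii.a)), $a=c$ (a point of ii.b)), or $b=c$ (a point of ii.c)): this sub-branch contributes nothing new.

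The substantive case is the \virg{generic} branch $p_1\neq p_2$, $p_3\neq p_4$, in which (\ref{eq:eosp 3})--(\ref{eq:eosp 4}) are genuine constraints. Here I would combine the relations above with the strict concavity of $g$ (equivalently, the strict monotonicity of the divided differences $u\mapsto(\log u-\log c)/(u-c)$) to show that a solution must obey the defining constraints of one of i.a), ii.a), i.b), i.c). The dichotomy is between $\{p_1,p_2\}=\{p_3,p_4\}$ as unordered pairs — which, after matching with (\ref{eq:eosp 3})--(\ref{eq:eosp 4}), is either $p_3=p_1,\ p_4=p_2$ (i.a)) or $p_3=p_2,\ p_4=p_1$ (ii.a)) — and the \virg{linking} relation $p_3+p_4=1-2p_1-2p_2$ (equivalently $m^2=-2m^1$), whose resolution together with (\ref{eq:eosp 3})--(\ref{eq:eosp 4}) forces $p_3=(1-2p_1-2p_2)p_2^2/(p_1^2+p_2^2)$, $p_4=(1-2p_1-2p_2)p_1^2/(p_1^2+p_2^2)$, i.e. i.b), and its transpose $p_1+p_2=1-2p_3-2p_4$ giving i.c). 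Finally, for each of the six reductions one substitutes its defining constraints into (\ref{eq:eosp 1})--(\ref{eq:eosp 4}) and verifies directly that exactly two of the four equations become identities and the other two collapse to the stated pair of reduced equations of state (such as (\ref{eq:eos p ub a red 1})--(\ref{eq:eos p ub a red 2})).

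I expect the generic branch to be the main obstacle: because (\ref{eq:eosp 1})--(\ref{eq:eosp 4}) are transcendental, algebra alone does not rule out spurious solutions there, and a naive dimension count even suggests extra two-parameter families; the convexity and monotonicity inputs above are precisely what convert the exponential constraints into the finitely many algebraic alternatives appearing in the statement. A secondary difficulty is bookkeeping: the six reductions overlap on lower-dimensional loci (e.g. $p_1=p_2=p_3=p_4$, or ${\bf B}=0$), so one must keep track of these intersections to present the classification both exhaustively and without redundancy, and one should check that the constraints cutting out the domain $\mathcal{D}$ are respected along each reduction.
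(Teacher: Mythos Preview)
Your overall architecture differs from the paper's and your preliminary rewriting is correct and illuminating: the identities $p_5+p_6=1-\Sigma$ and $p_5p_6=p_1p_2p_3p_4(1-\Sigma)^2/(p_1p_3+p_2p_4)^2$ do reduce (\ref{eq:eosp 1})--(\ref{eq:eosp 2}) at zero fields to $g(p_1)+g(p_2)=g(p_3)+g(p_4)=g(p_5)+g(p_6)$, the computation $N_3+N_4=3(p_1p_3+p_2p_4)(p_2+p_3-p_1-p_4)$ is right, and your treatment of the degenerate sub-branches $p_3=p_4$ and $p_1=p_2$ (including the strictly-convex two-zeros argument for the $a=b=c$ sub-case) is sound. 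The paper, by contrast, does \emph{not} pass through the $g$-function at all: it first isolates the reductions i.b), ii.b) (resp.\ i.c), ii.c)) as the ``special'' solutions of (\ref{eq:eosp 1}) (resp.\ (\ref{eq:eosp 2})) obtained by setting simultaneously the linear coefficient $2p_1+2p_2+p_3+p_4-1$ and the log-argument minus $1$ to zero, and only then treats the generic case.

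The genuine gap in your proposal is the generic branch. You assert a dichotomy---either $\{p_1,p_2\}=\{p_3,p_4\}$ as sets, or the linking relation $p_3+p_4=1-2p_1-2p_2$ (and its transpose)---and say that strict concavity of $g$ will deliver it. But concavity alone does not: the equation $g(p_1)+g(p_2)=g(p_3)+g(p_4)$ merely places the pairs $(p_1,p_2)$ and $(p_3,p_4)$ on the same level curve of the symmetric concave function $G(u,v)=g(u)+g(v)$, a one-parameter locus, and the additional constraint $g(p_3)+g(p_4)=g(p_5)+g(p_6)$ is partly redundant through the normalisation and the product identity $p_1p_3p_5=p_2p_4p_6$. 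Nowhere in this does one see why $\{p_1,p_2\}\neq\{p_3,p_4\}$ should force $p_3+p_4+2p_1+2p_2=1$. (Also, once that linking relation holds, it is (\ref{eq:eosp 1})---which then reads $0=\tfrac12\log[(p_1p_3+p_2p_4)^2/(p_3p_4(p_1+p_2)^2)]$---and not (\ref{eq:eosp 3})--(\ref{eq:eosp 4}), that yields the algebraic expressions for $p_3,p_4$ in i.b).) The paper closes this branch by a different mechanism: it eliminates $t$ between (\ref{eq:eosp 1}) and (\ref{eq:eosp 2}) to obtain a single relation (\ref{reduction a proof}), observes that its left-hand side is invariant under a one-parameter rescaling $\vartheta$ of a subset of the $p_i$, and then imposes $k$-independence of the right-hand side to extract the algebraic constraint $p_1+p_2=p_3+p_4$; a parallel elimination of $\lambda t$ from (\ref{eq:eosp 3})--(\ref{eq:eosp 4}) and the same scaling trick give $(p_1p_3-p_2p_4)(p_1+p_4-p_2-p_3)=0$, whence i.a) or ii.a). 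If you want to keep your $g$-formulation, you will need an argument of comparable strength---concavity plus the further transcendental constraints from (\ref{eq:eosp 3})--(\ref{eq:eosp 4})---to replace this scaling step; as written, the proposal does not supply it.
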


\begin{proof}
	Let us consider  Eqs.~(\ref{eq:eosp 1})-(\ref{eq:eosp 4}) restricted to the condition $x=y=z=w=0$. Observing that Eq.~(\ref{eq:eosp 1}) has special solutions such that 
	\begin{align}
		2 p_1 +2 p_2 + p_3 +p_4 -1=&0\\
		(p_1 p_3 +p_2 p_4)^2-p_3 p_4 (1-p_1-p_2-p_3-p_4)^2=&0,
	\end{align} 
the above system  admits two solutions for $p_3$ and $p_4$ as functions of $p_1$ and $p_2$: one is $p_3=\left[ (1-2\, p_1-2 \, p_2)p_2^{\,2} \right] / (p_1^{\, 2}+p_2^{\, 2})$, $p_4=\left[ (1-2\, p_1-2\,p_2)p_1^{\,2} \right] /(p_1^{\, 2}+p_2^{\, 2})$, and the other is $p_3=p_4=1/2- p_1 - p_2$. Substituting the first into  Eq~(\ref{eq:eosp 2}) we obtain Eq.~(\ref{eq:eos p ub a red 1}), while the same constraints imply that Eqs.~(\ref{eq:eosp 3})-(\ref{eq:eosp 4}) reduce to Eq.~(\ref{eq:eos p ub a red 1}), thus proving  reduction i-b). If we consider the second solution instead, we obtain (\ref{eq:eos p uu red 1}) from Eqs.~(\ref{eq:eosp 2}) and (\ref{eq:eos p uu red 2}) from Eqs.~(\ref{eq:eosp 3})- (\ref{eq:eosp 4}), thus proving reduction ii) b). Similarly, Eq~(\ref{eq:eosp 2}) admits solutions such that
	\begin{align}
		2p_3 +2 p_4 + p_1 +p_2 -1=&0\\
		(p_1 p_3 +p_2 p_4)^2-p_1 p_2 (1-p_1-p_2-p_3-p_4)^2=&0 \,,
	\end{align}
which provide two solutions: one is $p_1=\left[ (1-2\, p_3-2\,p_4)p_4^{\,2} \right] / ( p_3^{\, 2}+p_4^{\, 2})$, $p_2=\left[ (1-2\, p_3-2 \, p_4)p_3^{\,2}\right] (p_3^{\, 2}+p_4^{\, 2})$, and the other is $p_2=p_1$ and $p_3=1/2- p_1 - p_4$. Substituting the first solution into  Eq~(\ref{eq:eosp 1}), one obtains (\ref{eq:eos p ub c red 1}), while the same constraints imply that Eqs.~(\ref{eq:eosp 3})-(\ref{eq:eosp 4}) reduce to Eq.~(\ref{eq:eos p uu c red 1}), that is the reduction i-c). If we consider the second solution instead, we obtain (\ref{eq:eos p uu red 1}) from Eq.~(\ref{eq:eosp 2}) and (\ref{eq:eos p uu red 2}) from Eq.~(\ref{eq:eosp 3})-(\ref{eq:eosp 4}), thus yielding reduction ii)c). 
	When  $2 \,p_1 +2 \,p_2 + p_3 +p_4 -1\neq 0$ and $2\,p_3 +2 \,p_4 + p_1 +p_2 -1\neq 0 $, we can eliminate $t$ from  Eqs.~(\ref{eq:eosp 1})-(\ref{eq:eosp 2}) to get the following 
	\begin{align}
		\notag
		\log \left( \frac{p_1 \, p_3 +p_2 \,p_4}{p_3 \, p_4}\right)&= \frac{p_1+p_2 -p_3-p_4}{1-p_1-p_2-2\,p_3-2\,p_4} \log{\left( \frac{ (1-p_1-p_2-p_3-p_4)^2}{p_1\,  p_3 +p_2 \, p_4}\right) }  \\
		\label{reduction a proof} &+ \frac{1-2\,p_1-2\,p_2 -p_3-p_4}{1-p_1-p_2-2\,p_3-2\,p_4} \, \log \left(  \frac{p_1 \,p_3 +p_2 \, p_4}{p_1 \,  p_2}\right)\,.
	\end{align}
	Let $k>0$ be an arbitrary constant and $\vartheta$ be the  scale transformation defined by  $\vartheta: \, p_i\to k \, p_i$ for $i=1,2,4,6$. The l.h.s. of Eq.~(\ref{reduction a proof}) is invariant under the action of $\vartheta$, and is therefore independent of $k$. For consistency, the r.h.s. must retain the same property. 
	By applying $\vartheta$ to Eq~(\ref{reduction a proof}) and requiring that the r.h.s. is does not dependent on $k$, one obtains that  solutions satisfy
	\begin{equation}
		\label{eq:red proof a}
		p_1+p_2=p_3+p_4 \,.
	\end{equation} 
	We proceed by eliminating the factor $\lambda \,t$ from Eqs.~(\ref{eq:eosp 3})-(\ref{eq:eosp 4}), obtaining 
	\begin{equation}
		\label{reduction a proof 2}
		\log \left( \frac{p_1}{p_2}\right)\\
		=\frac{(1-3\,p_1+p_2-2\,p_4) \,p_2 \,p_4 -(1+p_1-2\,p_3-3\,p_2)\, p_1\,p_3}{(1-2\, p_1 +p_3 -3\, p_4) \, p_1 \,p_3 -(1-2\,p_2-3\, p_3+p_4)\, p_2\,p_4} \log{\left( \frac{p_3}{p_4 }\right) } \,.
	\end{equation}
	The generic solution is obtained by the same scaling argument. More precisely, invariance of both sides of Eq~(\ref{reduction a proof 2}) under the action of $\vartheta$ gives $	(p_1 \, p_3 -p_2 \, p_4)(p_1 \, p_3 +p_2 \, p_4)(p_1+ p_4 - p_2- p_3)  \log{ \left(\frac{p_4}{p_3} \right) }=0$, which can be realised in the two following cases
	\begin{align}
		\label{eq:red proof a i}
		p_1+ p_4 &= p_2+ p_3\\
		\label{eq:red proof a ii}
		p_1 \, p_3 &=p_2 \, p_4 \,.
	\end{align}
	System of Eqs.~(\ref{eq:red proof a})-(\ref{eq:red proof a i}) has solutions $p_3=p_1$ and $p_4=p_2$, while system of Eqs.~(\ref{eq:red proof a})-(\ref{eq:red proof a ii}) has solutions $p_3=p_2$ and $p_4=p_1$. By imposing the first of the two sets of constraints to Eqs.~(\ref{eq:eosp 1})-(\ref{eq:eosp 4}), one obtains the system of equations~(\ref{eq:eos p ub a red 1})-(\ref{eq:eos p ub a red 2}), hence proving the reduction i)a), while the second set of constraints gives Eqs.~(\ref{eq:eos p uu red 1})-(\ref{eq:eos p uu red 2}), thus proving the reduction ii)a). 
\end{proof}

As we prove  in Theorem~\ref{thm: reduction m}, Lemma~\ref{thm: reduction in p} has a remarkable implication on the structure of the two order tensors of the theory. In order to proceed, it may be convenient to recall a criterion to  characterise the degree of biaxiality of a given order tensor ${\bf \Omega}$. This will be based on the \emph{biaxiality parameter} $\beta^2 ({\bf \Omega}):= 1- 6 \frac{\text{Tr}^2 ({\bf \Omega}^3)}{\text{Tr}^3 ({\bf \Omega}^2)}$, satisfying $0\leq \beta^2\leq 1$ \cite{Kaiser}.
\begin{definition} 
	A tensor ${\bf \Omega}$ is said to be uniaxial if $\beta^2 ({\bf \Omega})=0$ and biaxial if $0<\beta^2 ({\bf \Omega})\leq 1$. Furthermore,  in the extreme case $\beta^2 ({\bf \Omega})=1$,  ${\bf \Omega}$ is said to be maximally biaxial.
\end{definition}

The following theorem characterises the allowed forms of the two order tensors of the model. 
\begin{theorem}
	\label{thm: reduction m}
	In the absence of external fields, at all  temperatures and values of $\lambda$, the order tensors take one of the following two forms
	\begin{enumerate}[i)]
		\item ${\bf{Q}}$  uniaxial and ${\bf{B}}$ maximally biaxial;
		\item $\bf{Q}$ and $\bf{B}$  both uniaxial.

	\end{enumerate} 
\end{theorem}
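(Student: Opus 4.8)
The plan is to feed the six two‑parameter families produced by Lemma~\ref{thm: reduction in p} through the map~\eqref{eq:transformation p m} and read off the spectra of $\mathbf{Q}$ and $\mathbf{B}$. The only structural input needed is that for a traceless diagonal tensor $\mathbf{\Omega}=\diag(a,b,-a-b)$ one has $\Tr(\mathbf{\Omega}^2)=2(a^2+ab+b^2)$ and $\Tr(\mathbf{\Omega}^3)=-3\,ab\,(a+b)$, so that $\beta^2(\mathbf{\Omega})=0$ precisely when two eigenvalues coincide (equivalently $a=b$, or $a+2b=0$, or $2a+b=0$) and $\beta^2(\mathbf{\Omega})=1$ precisely when $ab(a+b)=0$, i.e.\ when one eigenvalue vanishes. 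Hence it suffices to verify, family by family, that $(m^1,m^2)$ satisfies one of the three coincident‑pair relations, and that $(m^3,m^4)$ satisfies either one of those relations (conclusion ii) or forces one of $m^3,m^4,-m^3-m^4$ to vanish (conclusion i).

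First I would treat the ``symmetric'' families ii)a)--ii)c). For ii)a) the substitution $p_3=p_2$, $p_4=p_1$ gives $m^1=m^2$ at once, and a one‑line computation from~\eqref{eq:transformation p m} gives $m^3=m^4=p_2-p_1$; both tensors then have a coincident pair of eigenvalues, hence are uniaxial. For ii)b), $p_3=p_4=\tfrac12-p_1-p_2$ gives $p_3+p_4=1-2p_1-2p_2$, so that $m^2=\tfrac23-2(p_1+p_2)=-2m^1$, while inserting this value of $p_3+p_4$ into the formulas for $m^3,m^4$ yields $m^3=p_1-p_2$ and $m^4=-2(p_1-p_2)$; again both uniaxial. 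Family ii)c) is the image of ii)b) under the involution $(p_1,p_2)\leftrightarrow(p_3,p_4)$ of~\eqref{eq:transformation p m} and gives $m^1=-2m^2$, $m^3=-2m^4$. This proves conclusion ii).

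Next I would treat the families i)a)--i)c). For i)a), $p_3=p_1$, $p_4=p_2$ gives $m^1=m^2$ (so $\mathbf{Q}$ is uniaxial), while the numerators in~\eqref{eq:transformation p m} defining $m^3$ and $m^4$ become opposite with equal denominators, whence $m^4=-m^3$ and $-m^3-m^4=0$ is an eigenvalue of $\mathbf{B}$; thus $\mathbf{B}$ is maximally biaxial. For i)b), the substitution $p_3=(1-2p_1-2p_2)p_2^{\,2}/(p_1^{\,2}+p_2^{\,2})$, $p_4=(1-2p_1-2p_2)p_1^{\,2}/(p_1^{\,2}+p_2^{\,2})$ again gives $p_3+p_4=1-2p_1-2p_2$, hence $m^2=-2m^1$ ($\mathbf{Q}$ uniaxial); and using $p_1p_3-p_2p_4\propto p_2-p_1$, $p_1p_3+p_2p_4\propto p_1+p_2$ and $1-p_3-p_4=2(p_1+p_2)$ one checks that the two terms in the numerator of $m^4$ cancel identically, so $m^4=0$ and $\mathbf{B}$ is maximally biaxial. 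Family i)c), obtained from i)b) by the same swap, gives $m^1=-2m^2$ and $m^3=0$. Since Lemma~\ref{thm: reduction in p} guarantees that these six families exhaust all field‑free solutions, conclusion i) follows and the proof is complete.

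The genuinely computational point --- and the one I expect to be the main obstacle --- is the identity $m^4\equiv 0$ (resp.\ $m^3\equiv 0$) on the reduction surface of family i)b) (resp.\ i)c)). This must be checked as an \emph{algebraic} identity in $p_1,p_2$ alone, \emph{not} invoking the state equations~\eqref{eq:eos p ub a red 1}--\eqref{eq:eos p ub a red 2}; that the cancellation is purely algebraic is exactly what makes the spectral dichotomy hold at every temperature and every $\lambda\in[0,1]$, as the statement asserts. One further minor point is the degenerate locus where a tensor becomes the zero tensor (the isotropic phase, e.g.\ $p_1=p_2$ inside family i)a), which forces $\mathbf{B}=0$): there the formula for $\beta^2$ is indeterminate, but the zero tensor is trivially uniaxial, so such configurations fall under conclusion ii) as a boundary case, consistently with family i)a) meeting family ii)a) on that locus.
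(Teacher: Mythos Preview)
Your proof is correct and follows essentially the same strategy as the paper: both run the six families of Lemma~\ref{thm: reduction in p} through the map~\eqref{eq:transformation p m}, read off the eigenvalue pattern of $\mathbf{Q}$ and $\mathbf{B}$ in each case, and conclude via $\beta^2$. Your preliminary remark that $\beta^2(\mathbf{\Omega})=0$ iff two eigenvalues coincide and $\beta^2(\mathbf{\Omega})=1$ iff one eigenvalue vanishes streamlines the verification slightly compared with the paper, which instead writes out the explicit diagonal forms of $\mathbf{Q}$ and $\mathbf{B}$ in all six subcases and then evaluates $\beta^2$ directly on the representative subcases i)a) and ii)a); your treatment of the degenerate isotropic locus is an extra care the paper leaves implicit.
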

\begin{proof}
	The result is readily obtained by considering  Lemma~\ref{thm: reduction in p} and the transformation $\varphi$ specified by Eqs.~(\ref{eq:transformation p m}). The subcases a), b) and c) of Lemma~\ref{thm: reduction in p} in each of the two cases i) and ii), correspond to a particular choice of the principal axis. For instance, the transformation $\varphi$  evaluated along with case i) a) implies ${\bf{Q}}={\text{diag}} \,(m^1,m^1,-2\,m^1)$ and ${\bf{B}}=\text{diag} \,(m^3,-m^3,0)$, with
	\[ m^2=m^1=-\frac{1}{3} + p_1 +p_2 \quad \text{ and }\quad  m^4=-m^3= (p_1-p_2) \left( 1 + \frac{4 \,p_1 \, p_2 -p_1-p_2}{p_1^2+p_2^2}\right)\,,\]
	while case ii) a) leads to ${\bf{Q}}={\text{diag}} \,(m^1,m^1,-2\,m^1)$ and ${\bf{B}}=\text{diag} \,(m^3,m^3,-2 m^3)$, with 
	\[ m^2=m^1 = -\frac{1}{3} + p_1 +p_2  \quad \text{ and } \quad m^4=m^3= p_2 -p_1\,.\]
	
	Similarly, case i) b) corresponds to ${\bf{Q}}={\text{diag}} \,(m^1,-2 \, m^1, m^1)$ and ${\bf{B}}=\text{diag} \,(m^3,0, -m^3)$, with
	\[ m^2=-2 m^1=2 \left( \frac{1}{3} - p_1  -  p_2\right) \quad \text{ and }\quad  m^3= (p_1-p_2) \left( 1 + \frac{4 \,p_1 \, p_2 -p_1-p_2}{p_1^2+p_2^2}  \right) \, ,\, \, m^4=0\,,\]
	and case ii) b) corresponds to ${\bf{Q}}={\text{diag}} \,(m^1,-2 \, m^1, m^1)$ and ${\bf{B}}=\text{diag} \,(m^3,-2\,m^3, m^3)$, with
	\[ m^2=-2 m^1=2 \left( \frac{1}{3} - p_1  -  p_2\right) \quad \text{ and }\quad  m^4=-2 \,m^3= -2 (p_1-p_2)\,.\]
	Finally, case i) c) corresponds to ${\bf{Q}}={\text{diag}} \,(-2 m^2, \, m^2, m^2)$ and ${\bf{B}}=\text{diag} \,(0,m^4 ,-m^4)$, with
	\[ m^1=-2 m^2=2 \left( \frac{1}{3} - p_3  -  p_4\right) \quad \text{ and }\quad   m^3=0  \, ,\,m^4= (p_4-p_3) \left( 1 + \frac{4 \,p_3 \, p_4 -p_3-p_4}{p_3^2+p_4^2}  \right)  \,\,,\]
	and case ii) c) leads to  ${\bf{Q}}={\text{diag}} \,(-2 m^2, \, m^2, m^2)$  and ${\bf{B}}=\text{diag} \,(-2 m^4, m^4,m^4)$, with 
	\[ m^2=m^1 = -\frac{1}{3} + p_1 +p_2  \quad \text{ and } \quad  m^3=-2 m^4 =1 -2\, p_1 -4 \, p_4\,.\]
	The statement is proven  by evaluating the biaxiality parameter $\beta^2$ for ${\bf{Q}}$ and ${\bf{B}}$ along all cases. Due to the invariance by exchange of principal axes, $\beta^2$ for ${\bf{Q}}$ and ${\bf{B}}$ will take same values for all subcases a), b) and c) of a given case. Without loss of generality, we can consider  cases  i) a) and ii) a) to get, respectively, 
	\begin{enumerate}[i)]
		\item $\beta^2 ({\bf{Q}})=1\!- \!6 \frac{\text{Tr}^2 ({ {\text{diag}} \,((m^1)^3,(m^1)^3,-8\,(m^1)^3)})}{\text{Tr}^3 ({{\text{diag}} \,((m^1)^2,(m^1)^2,4\,(m^1)^2)})}=0$ , $\beta^2 ({\bf{B}})=1\!-\! 6 \frac{\text{Tr}^2 ({ \text{diag} \,((m^3)^3,-(m^3)^3,0)})}{\text{Tr}^3 ({\text{diag} \,((m^3)^2,(m^3)^2,0)})}=1$, that is ${\bf{Q}}$ uniaxial and ${\bf{B}}$ maximally biaxial;
		\item $\beta^2 ({\bf{Q}})=1\! - \! 6 \frac{\text{Tr}^2 ({ {\text{diag}} \,((m^1)^3,(m^1)^3,-8\,(m^1)^3)} )}{\text{Tr}^3 ({{\text{diag}} \,((m^1)^2,(m^1)^2,4\,(m^1)^2)})}=0$ , $\beta^2 ({\bf{B}})=1\!-\! 6 \frac{\text{Tr}^2 ({ \text{diag} \,((
				m^3)^3,(m^3)^3,-8\,(m^3)^3)})}{\text{Tr}^3(\text{diag} \,((m^3)^2,(m^3)^2,4\, (m^3)^2) )}=0$, hence ${\bf{Q}}$ and ${\bf{B}}$ both uniaxial.
	\end{enumerate}
\end{proof}
A direct consequence of the reductions ii) in Theorem~\ref{thm: reduction m} and the transformation $\varphi$ is the following corollary. 
\begin{corollary}
	\label{thm:uniaxial-uniaxial m explicit}
	The equations of state for the model in the case of  ${\bf{Q}}$ and  ${\bf{B}}$ both uniaxial, cases ii) in Theorem~(\ref{thm: reduction in p}), can be written explicitly in terms of the eigenvalues $m^l$. In particular, we have that reductions ii) a), b) and c) can be written as follows
	\begin{enumerate}[ii)]
		\item \begin{enumerate}[a)]
			\item	$m^2=m^1$ and $m^4=m^3$ with 
			\begin{align}
				\label{eq: eos red m uu a eq1}
				6 \, m^1 t &= \log \left( \frac{(1+3\, m^1 + 3 \, m^3 )(1+ 3 \, m^1 - 3 \, m^3)}{(1-6 \, m^1)^2} \right)\,,\\
				\label{eq: eos red m uu a eq2}
				6 \, m^3 \, \lambda \, t &= \log \left( \frac{1+3\, m^1 + 3 \, m^3 }{1+ 3 \, m^1 - 3 \, m^3} \right) \,;
			\end{align}
			\item $m^2=-2\,m^1$ and $m^4=-2\,m^3$ with $m^1$ and $m^3$ specified by Eqs.~(\ref{eq: eos red m uu a eq1})-(\ref{eq: eos red m uu a eq2});
			\item		$m^1=-2\,m^2$ and $m^3=-2 \,m^3$ with 
			\begin{align}
				\label{eq: eos red m uu c eq1}
				6 \, m^2 t &= \log \left( \frac{(1+3\, m^2 + 3 \, m^4 )(1+ 3 \, m^2 - 3 \, m^4)}{(1-6 \, m^2)^2} \right)\,,\\
				\label{eq: eos red m uu c eq2}
				6 \, m^4 \, \lambda \, t &= \log \left( \frac{1+3\, m^2 + 3 \, m^4 }{1+ 3 \, m^2- 3 \, m^4} \right) \,.
			\end{align}
		\end{enumerate}
	\end{enumerate}

\end{corollary}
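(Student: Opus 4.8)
The plan is to feed the three $p$-reductions ii) a), b), c) of Lemma~\ref{thm: reduction in p} through the change of variables $\varphi$ of Eqs.~(\ref{eq:transformation p m}) and then simplify. The crucial observation, already isolated in the proof of Theorem~\ref{thm: reduction m}, is that on each of these reductions $\varphi$ degenerates into a pair of \emph{affine} relations linking the surviving eigenvalues to two of the Gibbs weights: for ii) a) (where $p_3=p_2$, $p_4=p_1$) one has $m^1=m^2=p_1+p_2-\tfrac13$ and $m^3=m^4=p_2-p_1$; for ii) b) (where $p_3=p_4=\tfrac12-p_1-p_2$) one has $m^1=p_1+p_2-\tfrac13$ and $m^3=p_1-p_2$, together with $m^2=-2m^1$ and $m^4=-2m^3$; for ii) c) (where $p_1=p_2=\tfrac12-p_3-p_4$) one has $m^2=p_3+p_4-\tfrac13$ and $m^4=p_4-p_3$, together with $m^1=-2m^2$ and $m^3=-2m^4$.

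First I would invert these affine relations: in case ii) a) this gives $p_1=\tfrac16(1+3m^1-3m^3)$ and $p_2=\tfrac16(1+3m^1+3m^3)$; in case ii) b) the same with $m^3$ replaced by $-m^3$; and in case ii) c) the same with $(p_1,p_2,m^1,m^3)$ replaced by $(p_3,p_4,m^2,m^4)$. Next I would substitute these expressions into the two equations of the relevant reduction — namely Eqs.~(\ref{eq:eos p uu red 1})-(\ref{eq:eos p uu red 2}) for cases a) and b), and Eqs.~(\ref{eq:eos p uu c red 1})-(\ref{eq:eos p uu c red 2}) for case c) — and simplify using the elementary identities $1-3(p_1+p_2)=-3m^1$, $1-2(p_1+p_2)=\tfrac13(1-6m^1)$, $4p_1p_2=\tfrac19(1+3m^1+3m^3)(1+3m^1-3m^3)$, and $p_1/p_2=(1+3m^1-3m^3)/(1+3m^1+3m^3)$ (together with their evident analogues in the other cases). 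Multiplying the resulting two equations through by $-2$ turns the pair for case ii) a) into Eqs.~(\ref{eq: eos red m uu a eq1})-(\ref{eq: eos red m uu a eq2}). Since only $(m^3)^2$ enters the first equation and the substitution $m^3\mapsto-m^3$ combined with $p_1\leftrightarrow p_2$ leaves the pair invariant, case ii) b) produces exactly the same equations; case ii) c) produces Eqs.~(\ref{eq: eos red m uu c eq1})-(\ref{eq: eos red m uu c eq2}) after the relabelling above.

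The hard part is really just bookkeeping: there is no genuine obstacle once Lemma~\ref{thm: reduction in p} and Theorem~\ref{thm: reduction m} are in hand. Two small points deserve a line of justification. First, the two equations which — before the $\varphi$-constraints are imposed — originate from distinct members of the system~(\ref{eq:eosp 1})-(\ref{eq:eosp 4}) must be seen to collapse onto the single stated pair; but this is precisely what is checked inside the proof of Lemma~\ref{thm: reduction in p}. Second, all logarithm arguments must stay positive on the pertinent part of the domain $\mathcal D$, i.e. $1-6m^1>0$ and $(1+3m^1)^2>9(m^3)^2$ (equivalently, $p_1,p_2>0$ and $p_1+p_2<\tfrac12$), which is immediate from $p_i\in[0,1]$ and $\sum_i p_i=1$ together with the constraints $p_3=p_2$, $p_4=p_1$ (and the analogous relations in the remaining cases).
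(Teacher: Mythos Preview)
Your proposal is correct and follows essentially the same route as the paper: observe (via Theorem~\ref{thm: reduction m}) that on each uniaxial--uniaxial reduction the map $\varphi$ is affine, invert it explicitly (the paper writes out exactly the same formula $\varphi_a^{-1}:(m^1,m^3)\mapsto\bigl(\tfrac{1+3m^1-3m^3}{6},\tfrac{1+3m^1+3m^3}{6}\bigr)$), and substitute into the corresponding $p$-equations of Lemma~\ref{thm: reduction in p}. You supply more of the intermediate algebra and add a remark on positivity of the logarithm arguments, but the argument is the same.
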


\begin{proof}
	As shown in the proof of Theorem~(\ref{thm: reduction m}), the transformation $\varphi$ is linear when restricted to the case of ${\bf{Q}}$ and ${\bf{B}}$ both uniaxial. Hence,  the transformation can be easily inverted  to get the projection  $\varphi^{-1}: (m^1,m^2,m^3,m^4) \longmapsto (p_1, p_2,p_3, p_4)$ along with each particular 2-parameter reduction. The equations in terms of the eigenvalues are then  obtained by application of the inverse transformation for the specific reduction to the corresponding set of equations in the $p-$variables. Taking the case  ii) a) as an  example, the application of  $\varphi_{a}^{-1}:=\{ \varphi \,|_{p_3=p_2,\, p_4=p_1}\}^{-1}$, explicitly given by
	\[  \varphi_{a}^{-1}:\, \, (m^1, m^3) \longmapsto (p_1, p_2)=\left( \frac{1+3 \, m^1 -3 \, m^3}{6} , \frac{1+3 \, m^1 + 3 \, m^3}{6} \right) \,,\]
	to Eqs.~(\ref{eq:eos p uu red 1})-(\ref{eq:eos p uu red 2}) gives Eqs.~(\ref{eq: eos red m uu a eq1})-(\ref{eq: eos red m uu a eq2}).  Equations (\ref{eq: eos red m uu a eq1})-(\ref{eq: eos red m uu a eq2}) and (\ref{eq: eos red m uu c eq1})-(\ref{eq: eos red m uu c eq2}) for b) and c), respectively, are obtained in a similar fashion.
\end{proof}
Unlike  uniaxial-uniaxial reductions discussed above, uniaxial-maximally biaxial reductions cannot be  written in explicit simple form in terms of the $m^l$ variables.

In this paper, we will focus our discussion on the phase behaviour  in the absence of external fields. We note, however, that the above reductions are also compatible with non-zero fields values subject to suitable constraints. Indeed, the following proposition allows to identify the constraints on the external fields so that the system admits \emph{uniaxial-maximally biaxial} and \emph{uniaxial-uniaxial} solutions for ${\bf{Q}}$ and ${\bf{B}}$. In such cases we can still consider 2-parameter reductions of the system, with the equations of state also accounting  for the action of applied fields. 

\begin{proposition}
	\label{thm: reductions fields}
	In the presence of external fields, the system~(\ref{eq:eosp 1})-(\ref{eq:eosp 4}) admits the following \emph{uniaxial-maximally biaxial} two-parameter reductions:
	\begin{enumerate}[i)]
		\item 
		\begin{enumerate}[a)]
			\item   $p_3=p_1$ and $p_4=p_2$, provided that  external fields satisfy $y=x$ and $w=-z$, that is $\epsilon_1=\epsilon_2$ and $\chi_3=\frac{\chi_1 +\chi_2}{2}$, specified by
			\begin{align}
				\label{eq:eos p ub a f red 1}
				x +  (3 \,  p_1+3 \, p_2-1)\,t &= \frac{1}{2} \log \left( \frac{  \left( p_1^2 + p_2^2 \right)^2 }{ p_1 \, p_2 \, ( 1-2 \,  p_1 -2 \, p_2)^2} \right)\\
				\label{eq:eos p ub a f red 2}
				z + (p_2-p_1)\left( 1 + \frac{4 \, p_1 \, p_2 -p_1 -p_2}{p_1^2+p_2^2}\right) \lambda \, t&= \frac{1}{2} \log \left( \frac{p_1}{p_2} \right) \, 
			\end{align}
			\item  $p_3=\frac{(1-2\, p_1-2 \, p_2)p_2^{\,2}}{p_1^{\, 2}+p_2^{\, 2}}$ and $p_4=\frac{(1-2\, p_1-2\,p_2)p_1^{\,2}}{p_1^{\, 2}+p_2^{\, 2}}$,  provided that $x=0$ and $z=2w$, that is $\epsilon_1=\epsilon_3$ and $\chi_2=\frac{\chi_1 +\chi_3}{2}$,  specified by
			\begin{align}
				\label{eq:eos p ub b f red 1}
				y +  (1-3 \,  p_1-3 \, p_2)\,t &= \frac{1}{2} \log \left( \frac{p_1 \, p_2 \, ( 1-2 \,  p_1 -2 \, p_2)^2 }{ \left( p_1^2 + p_2^2 \right)^2 } \right)\\
				\label{eq:eos p ub b f red 2}
				w + (p_1-p_2)\left( 1 + \frac{4 \, p_1 \, p_2 -p_1 -p_2}{p_1^2+p_2^2}\right) \lambda \, t&= \frac{1}{2} \log \left( \frac{p_2}{p_1} \right) \, 
			\end{align}
			\item $p_1=\frac{(1-2\, p_3-2\,p_4)p_4^{\,2}}{p_3^{\, 2}+p_4^{\, 2}}$ and $p_2=\frac{(1-2\, p_3-2 \, p_4)p_3^{\,2}}{p_3^{\, 2}+p_4^{\, 2}}$, provided that $y=0$ and $w=2z$, that is $\epsilon_2=\epsilon_3$ and $\chi_1=\frac{\chi_2 +\chi_3}{2}$, and specified by
			\begin{align}
				\label{eq:eos p ub c red 1}
				x+(1-3 \,  p_3-3 \, p_4)\,t &= \frac{1}{2} \log \left( \frac{p_3 \, p_4 \, ( 1-2 \,  p_3 -2 \, p_4)^2 }{ \left( p_3^2 + p_4^2 \right)^2 } \right)\\
				\label{eq:eos p ub c red 2}
				z+ (p_4-p_3)\left( 1 + \frac{4 \, p_3 \, p_4 -p_3 -p_4}{p_3^2+p_4^2}\right) \lambda \, t&= \frac{1}{2} \log \left( \frac{p_3}{p_4} \right) \,  
			\end{align}
		\end{enumerate}
		and  \emph{uniaxial-uniaxial} two-parameter reductions:

		\item \begin{enumerate}[a)]
			\item   $p_3=p_2$ and $p_4=p_1$,   provided that $x=y$ and $z=w$, that is $\epsilon_1=\epsilon_2$ and $\chi_1=\chi_2$,  specified by
			\begin{align}
				\label{eq:eos p uu red f 1}
				x+ (3 \,p_1 +3 \, p_2-1)\,t &= \frac{1}{2} \log \left( \frac{4 \,p_1 \, p_2 }{ \left( 1- 2 \, p_1-2 \, p_2 \right)^2 } \right)\\
				\label{eq:eos p uu red f 2}
				z+3 \,(p_2-p_1)\,  \lambda \, t&= \frac{1}{2} \log \left( \frac{p_2}{p_1} \right) \, 
			\end{align}
			\item  $p_3=p_4=\frac{1}{2}- p_1 - p_2$, provided by $x=0$ and $z=0$, that is $\epsilon_1=\epsilon_3$ and $\chi_1=\chi_3$, specified by
			\begin{align}
				\label{eq:eos p uu b red f 1}
				y+ (1-3 \,p_1- 3 \, p_2)\,t &= \frac{1}{2} \log \left( \frac{  \left( 1- 2 \, p_1-2 \, p_2 \right)^2}{ 4 \,p_1 \, p_2} \right)\\
				\label{eq:eos p uu b red f 2}
				w+3 \,(p_2-p_1)\,  \lambda \, t&= \frac{1}{2} \log \left( \frac{p_2}{p_1} \right) \, 
			\end{align}
			
			\item $p_2=p_1=\frac{1}{2}- p_3 - p_4$, provided that $y=0$ and $w=0$, that is $\epsilon_2=\epsilon_3$ and $\chi_2=\chi_3$, specified by
			\begin{align}
				\label{eq:eos p uu c red f 1}
				x+	(1-3 \,p_3- 3 \, p_4)\,t &= \frac{1}{2} \log \left( \frac{  \left( 1- 2 \, p_3-2 \, p_4 \right)^2}{ 4 \,p_3 \, p_4} \right)\\
				\label{eq:eos p uu c red f 2}
				z+	3 \,(p_3-p_4)\,  \lambda \, t&= \frac{1}{2} \log \left( \frac{p_3}{p_4}  \right) \,.
			\end{align}
		\end{enumerate}
	\end{enumerate}
	\end{proposition}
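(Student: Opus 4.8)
The statement asserts the existence of the listed two-parameter reductions rather than a classification (unlike Lemma~\ref{thm: reduction in p}), so the plan is a direct verification. For each of the six cases I would substitute the stated ansatz for $(p_1,p_2,p_3,p_4)$ into the full equations of state (\ref{eq:eosp 1})--(\ref{eq:eosp 4}), simplify, and check that exactly two independent equations survive -- the ones displayed -- together with two linear conditions on $(x,y,z,w)$ that make the remaining two equations coincide with the surviving pair. Translating these into conditions on the physical fields is then immediate via $x=\beta\epsilon_{13}$, $y=\beta\epsilon_{23}$, $z=\beta\chi_{13}$, $w=\beta\chi_{23}$ together with $\epsilon_{k3}=\epsilon_k-\epsilon_3$, $\chi_{k3}=\chi_k-\chi_3$; for instance $x=y\Leftrightarrow\epsilon_1=\epsilon_2$, $w=-z\Leftrightarrow\chi_3=(\chi_1+\chi_2)/2$, $z=2w\Leftrightarrow\chi_2=(\chi_1+\chi_3)/2$, $x=0\Leftrightarrow\epsilon_1=\epsilon_3$, and so on.

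Consider the representative uniaxial--maximally biaxial case i)\,a). Setting $p_3=p_1$, $p_4=p_2$ gives $p_1p_3+p_2p_4=p_1^2+p_2^2$, $p_3p_4=p_1p_2$ and $1-p_1-p_2-p_3-p_4=1-2p_1-2p_2$, so (\ref{eq:eosp 1}) and (\ref{eq:eosp 2}) acquire the same right-hand side and the same $t$-coefficient $3p_1+3p_2-1$, differing only by the field term; consistency forces $x=y$ and leaves (\ref{eq:eos p ub a f red 1}). For (\ref{eq:eosp 3})--(\ref{eq:eosp 4}) the only nontrivial point is the simplification of the rational coefficients of $\lambda t$: under $p_3=p_1$, $p_4=p_2$ the numerator appearing in (\ref{eq:eosp 3}) factors as $(p_1-p_2)\bigl[(p_1+p_2)-p_1^2-p_2^2-4p_1p_2\bigr]$, so that the coefficient equals $-(p_1-p_2)\bigl(1+\tfrac{4p_1p_2-p_1-p_2}{p_1^2+p_2^2}\bigr)$, while the coefficient in (\ref{eq:eosp 4}) turns out to be exactly its opposite; since moreover $\log(p_3/p_4)=\log(p_1/p_2)$, equation (\ref{eq:eosp 3}) becomes (\ref{eq:eos p ub a f red 2}) and (\ref{eq:eosp 4}) becomes the same equation with $z$ replaced by $-w$, forcing $w=-z$. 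This gives $\epsilon_1=\epsilon_2$ and $\chi_3=(\chi_1+\chi_2)/2$, as claimed.

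Cases i)\,b) and i)\,c) are handled in the same way, but starting from the other branch already isolated in the proof of Lemma~\ref{thm: reduction in p}, namely the one on which $p_3+p_4=1-2p_1-2p_2$ and $(p_1p_3+p_2p_4)^2=p_3p_4(1-p_1-p_2-p_3-p_4)^2$ hold simultaneously (and the index-permuted analogue for c)). On that branch (\ref{eq:eosp 1}) collapses to $x=0$, (\ref{eq:eosp 2}) reduces to (\ref{eq:eos p ub b f red 1}), and -- using $p_3/p_4=(p_2/p_1)^2$, hence $\log(p_3/p_4)=2\log(p_2/p_1)$ -- equations (\ref{eq:eosp 3})--(\ref{eq:eosp 4}) yield (\ref{eq:eos p ub b f red 2}) together with the consistency relation $z=2w$, i.e.\ $\epsilon_1=\epsilon_3$ and $\chi_2=(\chi_1+\chi_3)/2$. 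The uniaxial--uniaxial cases ii) are lighter: on the ansätze $p_3=p_2,p_4=p_1$, or $p_3=p_4=\tfrac12-p_1-p_2$, or $p_1=p_2=\tfrac12-p_3-p_4$, the products entering (\ref{eq:eosp 1})--(\ref{eq:eosp 4}) degenerate enough that the $\lambda t$-coefficients in (\ref{eq:eosp 3})--(\ref{eq:eosp 4}) collapse to \emph{linear} expressions (both equal to $3(p_2-p_1)$ for ii)\,a), while for ii)\,b) and ii)\,c) one of the two coefficients vanishes identically) and the relevant logarithms simplify; reading off the surviving pair and the consistency conditions then gives (\ref{eq:eos p uu red f 1})--(\ref{eq:eos p uu red f 2}) with $x=y,\,z=w$ for ii)\,a), and the b), c) analogues with $x=0,\,z=0$ and $y=0,\,w=0$ respectively.

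The only genuine computation is the polynomial bookkeeping behind these factorisations of the $\lambda t$-coefficients in (\ref{eq:eosp 3})--(\ref{eq:eosp 4}): showing that on each ansatz the two coefficients are opposite (for the uniaxial--maximally biaxial reductions) or that one of them vanishes (for the uniaxial--uniaxial reductions), and that the non-vanishing one factors through $(p_1-p_2)$ or $(p_3-p_4)$ times the very rational function already occurring in Lemma~\ref{thm: reduction in p}. I would organise it by first recording, for each ansatz, the values taken by $p_1p_3+p_2p_4$, $p_3p_4$ (or $p_1p_2$) and $1-\sum_i p_i$, and only then expanding the numerators; this keeps the otherwise error-prone sign tracking under control. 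Everything else is substitution plus the elementary dictionary between the rescaled fields $(x,y,z,w)$ and $(\epsilon_1,\epsilon_2,\epsilon_3,\chi_1,\chi_2,\chi_3)$.
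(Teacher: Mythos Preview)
Your proposal is correct and follows essentially the same approach as the paper's proof: impose each ansatz on the full system (\ref{eq:eosp 1})--(\ref{eq:eosp 4}), read off the compatibility conditions on $(x,y,z,w)$ from the resulting redundant pair of equations, and extract the two surviving reduced equations. The paper treats case i)\,a) explicitly and then states that the remaining cases ``are obtained following the same procedure''; your write-up simply spells out more of the algebra (the factorisation of the $\lambda t$-coefficients, the use of $p_3/p_4=(p_2/p_1)^2$ on branch i)\,b), the collapse to linear coefficients in the uniaxial--uniaxial cases), but there is no methodological difference.
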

	\begin{proof}
	The constraints on the fields follow from looking for either  uniaxial-maximally biaxial or uniaxial-uniaxial reductions of the whole set of equations of state, Eqs.~(\ref{eq:eosp 1})-(\ref{eq:eosp 4}). For instance, the uniaxial-maximally biaxial reduction i) a) requires $p_3=p_1$ and $p_4=p_2$. Eqs.~ (\ref{eq:eosp 1})-(\ref{eq:eosp 4}) restricted to this constraint imply that  compatibility of the first two equations restricts fields to $y=x$, while compatibility of the third and fourth requires $w=-z$. Elementary algebraic manipulations lead to  Eqs.~(\ref{eq:eos p ub a f red 1})-(\ref{eq:eos p ub a f red 2}). The set of field-dependent equations of state in the case i) b) and c), and ii) a), b) and c) are obtained following the same procedure, together with the associated constraints on the fields.
\end{proof}


\section{Order parameters in the 2-parameter reductions} 
\label{sec:order parameters}
The equations of state \eqref{eq:eosp 1}-\eqref{eq:eosp 4} are the critical points of the free-energy functional \eqref{eq: free energy sol explicit}.  According to the definition of free-energy  adopted in this paper, global maxima identify stable states of the system and associated phases.  Coexistence curves (hypersurfaces, in  general) arise as sets of control parameters for which two or more local maxima are resonant, hence identifying the coexistence of the corresponding phases. In order to proceed, we should therefore first identify all local maxima for each choice of  control parameters $x, y, z,w, \lambda$ and $t$.
In this section, we focus on the complete characterisation of the system in the absence of external fields, i.e. $x\!=y\!=\!z\!=w\!=\,0$, hence relying on Theorem~\ref{thm: reduction m} and its implications. An immediate consequence is that the onset of phase transitions is determined by analysing the singularities of  2-dimensional maps defined by Eqs.~\eqref{eq:eos p ub a red 1}-\eqref{eq:eos p uu c red 2} (see  \cite{dgm nematics} for an exhaustive treatment). Noticeably, this is a more affordable task  compared to  the full 4-dimensional problem governing the system when external fields are present, Eqs.~ \eqref{eq:eosp 1}-\eqref{eq:eosp 4}.\\
  By evaluating the Hessian matrix of the free energy density \eqref{eq: free energy sol explicit}, it turns out that none of the critical points in the uniaxial-uniaxial reductions, cases ii)a)-c) in Lemma~\ref{thm: reduction in p}) are stable. This result is consistent with what is known from  mean-field theories based on a continuum of molecular orientational states \cite{gdmromanov2005}. 
Therefore, the uniaxial-maximally biaxial  reductions, cases i)a)-c) in Lemma~\ref{thm: reduction in p}, are the only ones relevant from the equilibrium thermodynamics  viewpoint. The following subsections focus on the analysis of the resulting phase diagram  and the associated order parameters behaviour, with  case i)a) being considered for such purpose. Cases i)b) and i)c) can be straightforwardly  obtained from case i)a) via suitable linear transformations on $m^1$ and $m^3$, which merely correspond to   permutations of axes.

\subsection{Phase behaviour  in the absence of external fields}

The phase diagram of the model in the absence of external fields is shown in Fig~\ref{Fig:phase diagram lambda t}. The top row shows  the phase diagram in the $\lambda-t$ plane, while the bottom row shows the phase diagram in the  $\lambda-T^*$ plane, where $T^*$ is the dimensionless temperature defined by $T^*:=1/t=(k_B T)/\mu $. The $\lambda$-$t$ plane is divided in three regions identifying three distinct macroscopic phases, namely the \emph{isotropic} (I), the \emph{uniaxial nematic} (U) and the \emph{biaxial nematic} (B). The lines separating the different regions are either dotted black lines or  solid black lines. The former identify the so-called second-order transition lines, that is  the lines associated with phase changes characterised by continuous order parameters but discontinuous derivatives. The latter are instead associated to first-order lines, that is  the order parameters and their derivatives experience  a discontinuity when the line is crossed. Similarly to the   analysis performed in \cite{dgm nematics}, second-order lines are identified by cusp points of two-dimensional maps.  The cusp points of the model (red lines in Fig.~\ref{Fig:phase diagram lambda t}) are given explicitly in terms of the  transcendental curve
\[ 	 \mathcal{C}=\Big\lbrace{ (\lambda, t)\in \left[ 0,1 \right]\times \left[ 0,+ \infty \right)\, | \,
e^{t-\frac{1}{\lambda}}\left( 2- \lambda \, t \right) + 1- 2  \lambda\, t=0   \Big\rbrace }\,.
\] 
Notice that the cusp set can be seen as the union of two curves intersecting at the point $(\lambda,t)=(1/3,3)$.  The model admits two tricritical points, $(\lambda_{\text{tc}}^{(UB)},t_{\text{tc}}^{(UB)})=(0.217,2.854)$ (red circle) and $\left(\lambda_{\text{tc}}^{(IB)},t_{\text{tc}}^{(IB)}\right)=(2/3,3/2)$ (blue circle). The three phases coexist at the triple point, $\left(\lambda_{\text{tp}},t_{\text{tp}}\right)=(0.234,2.773)$ (green circle) identified by  the resonance condition for the corresponding maxima. A closer look in the region surrounding the triple point and the uniaxial-biaxial tricritical point is provided in the  right column.  The  cusp points in the $\lambda-T^*$ plane are given by the set 
\begin{equation*}
	 \mathcal{C^*}=\Big\lbrace{ (\lambda, T^*)\in \left[ 0,1 \right] ^2 \, | \, e^{\frac{1}{T^*}-\frac{1}{\lambda}}\left( 2- \frac{\lambda}{T^*}\right) + 1-\frac{2 \lambda}{T^*}=0  \Big\rbrace }.
	 \end{equation*}
The constraint $\lambda= T^*$  identifies the subset of cusp points associated to second-order lines for $\lambda \geq 2/3$.

\begin{figure}[h!]
	\begin{center}
		\includegraphics[height=6.8cm]{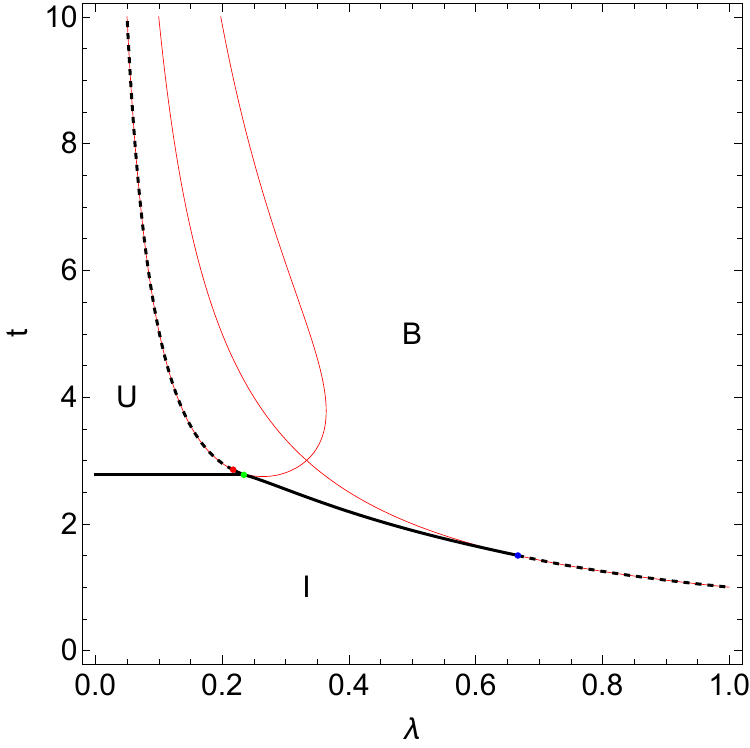} $\, \,$
		\includegraphics[height=6.8cm]{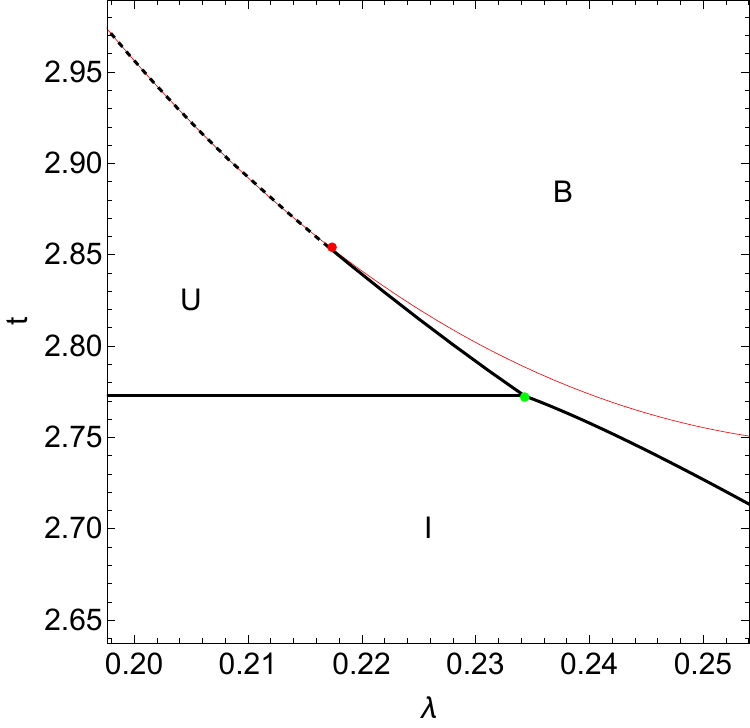}\\
				\includegraphics[height=6.8cm]{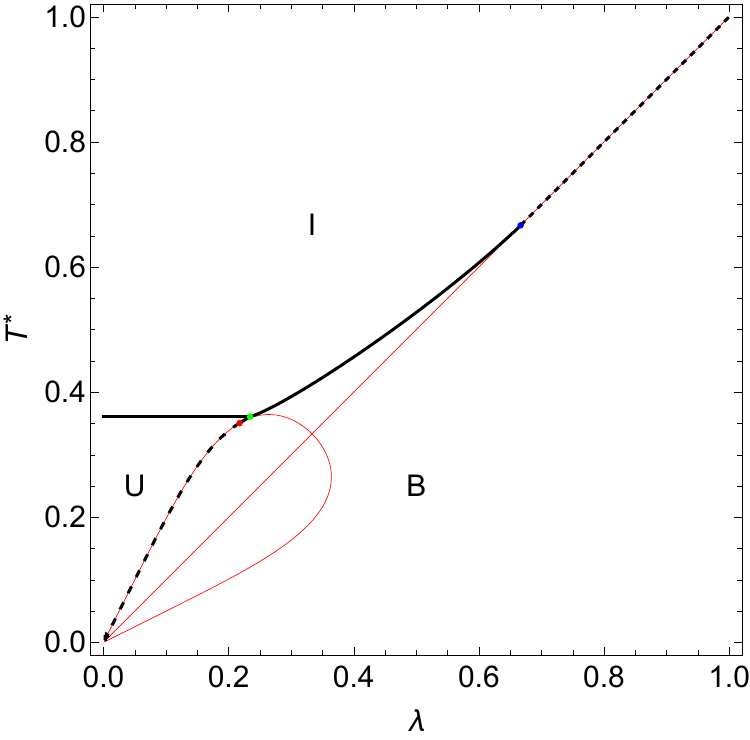} $\, \,$
			\includegraphics[height=6.8cm]{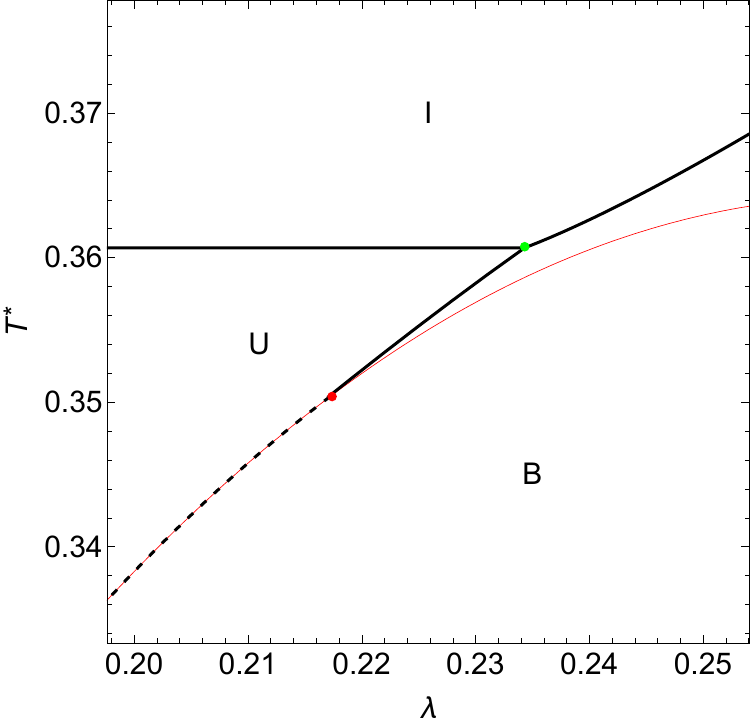}
		\caption{
			\label{Fig:phase diagram lambda t} Zero-fields phase diagram.  {\emph Top row}: phase diagram in the $\lambda-t$ plane.  {\emph Bottom row}: phase diagram in the $\lambda-T^*$ plane. The  figures in the right column represent a magnification of the phase diagram in the region surrounding the triple point (green circle) and the uniaxial-biaxial tricritical point (red circle). The line associated with uniaxial cusp points is indicated in red. }
	\end{center}
\end{figure}

\subsection{Order parameters in the absence of external fields}

In this section, we analyse the order parameters behaviour  for the reduction $m^2=m^1$, $m^4=-m^3$ in the absence of external fields as the temperature changes. Following our discussion on the phase diagram displayed in Fig~\ref{Fig:phase diagram lambda t}, we proceed by showing the expectation values $m^1$ and $m^3$ at different increasing values of $\lambda$. The values chosen for $\lambda$ aim at displaying the whole phenomenology predicted by the phase diagram. 

Fig.~\ref{Fig:mvst low lambda} shows the behaviour of order parameters in the absence of external fields for small values of $\lambda$. The case $\lambda=0$ (left column) reproduces the phenomenology of  the standard Maier-Saupe model, with the biaxial order parameter vanishing and a discontinuous isotropic-to-uniaxial nematic phase transition at $t_c^{NI}=4\log 2$. For small values of $\lambda$ (central column), that is $0<\lambda<\lambda_{tc}^{UB} \approx 0.217$, additionally to the isotropic-to-nematic phase transition, a continuous phase change at lower temperatures  is displayed from the uniaxial phase  to the biaxial phase. Consistently with the phase diagram in Fig~\ref{Fig:phase diagram lambda t}, the uniaxial-to-biaxial  phase transition becomes first-order at the uniaxial-biaxial tricritical point (right column), where both order parameters experience a gradient catastrophe at $t=t_{tc}^{(UB)}=2.854$.

\begin{figure}[h!]
	\begin{center}
		\includegraphics[height=4.6cm]{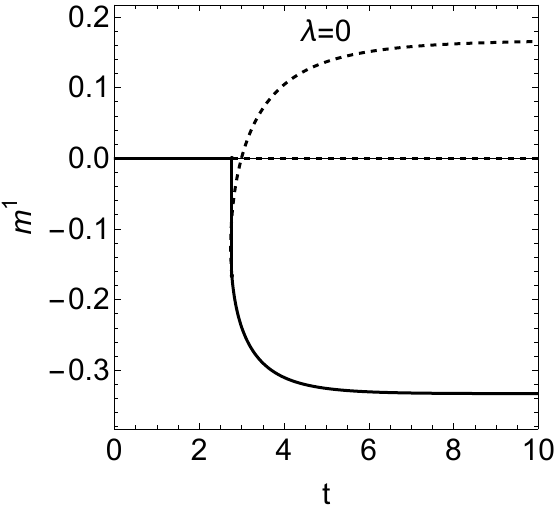}
		\includegraphics[height=4.6cm]{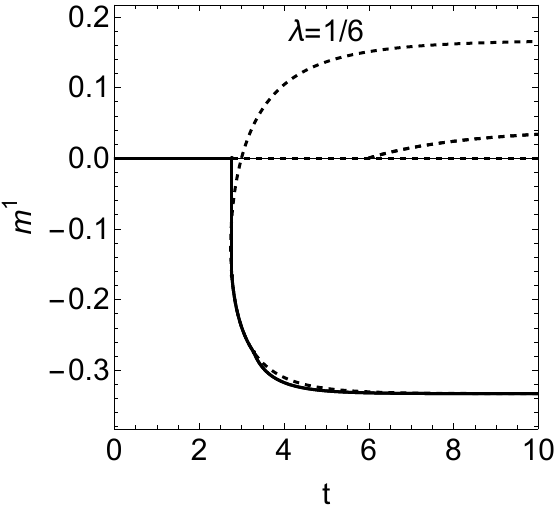}
		\includegraphics[height=4.6cm]{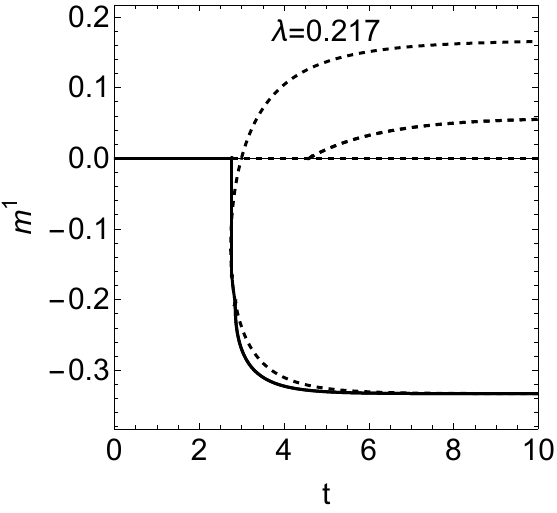}\\
		\includegraphics[height=4.6cm]{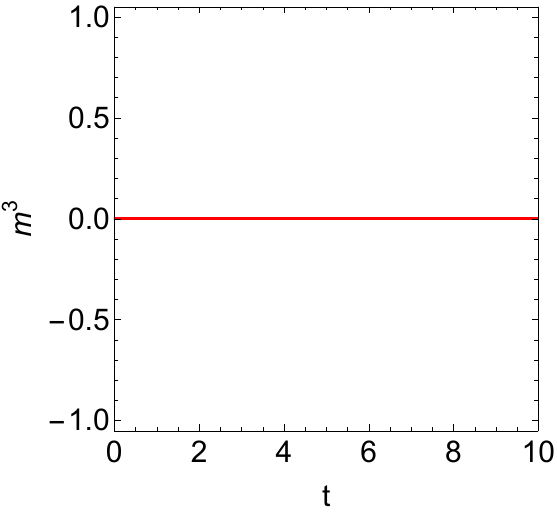}
		\includegraphics[height=4.6cm]{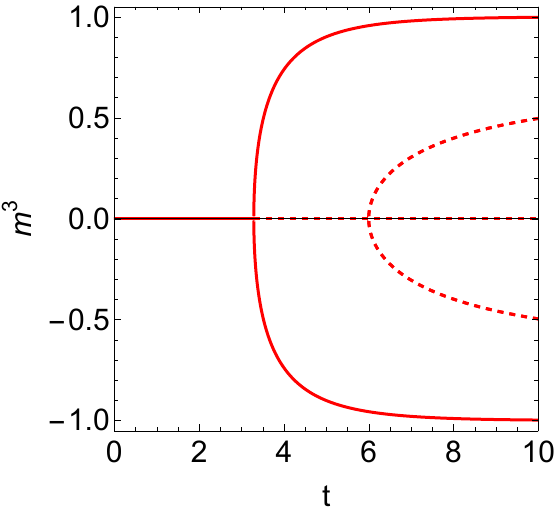}
		\includegraphics[height=4.6cm]{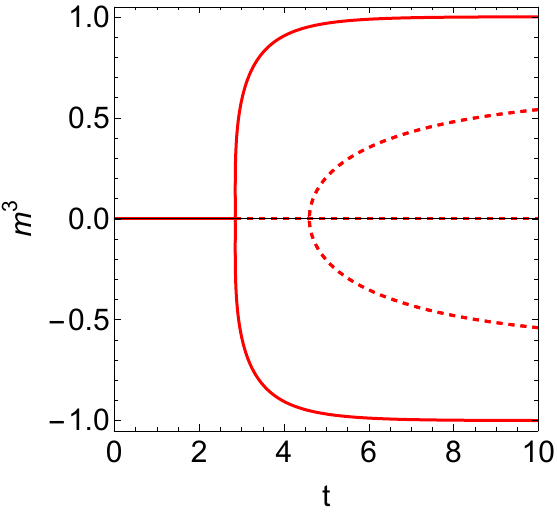}
		\caption{ Order parameters in the 2-parameter reduction for small values of $\lambda$. Each column shows both order parameters, $m^1$ (black) and $m^3$ (red) versus $t$ at a specific value of $\lambda$. The solutions corresponding to a global maximum of the free energy are displayed with solid lines, while other solutions are indicated with dotted lines. \emph{Left column}:  $\lambda=0$, that is the uniaxial Maier-Saupe  interaction potential. \emph{Centre column}:  $\lambda=1/6$. \emph{Right column}:  $\lambda=\lambda_{tc}^{UB}=0.217$.
		\label{Fig:mvst low lambda}}       
	\end{center}
\end{figure}

The behaviour for values of $\lambda$ in the interval $\left(\lambda_{tc}^{(UB)},\lambda_{tp}\right)$  is displayed in Fig~\ref{Fig:mvst double shock}. For values of $\lambda$ in this range the model predicts two first-order phase transitions, a isotropic-to-uniaxial phase transition at high temperature (low values of $t$) followed by a uniaxial-to-biaxial phase transition at lower temperatures (higher values of $t$). While the former is associated to a  shock that is static in $\lambda$, the latter is originated at the uniaxial-biaxial tricritical point and is identified by a classical shock whose location moves from low temperatures to higher temperatures as the biaxiality parameter $\lambda$ increases.

\begin{figure}[h!]
	\begin{center}
		\includegraphics[height=4.6cm]{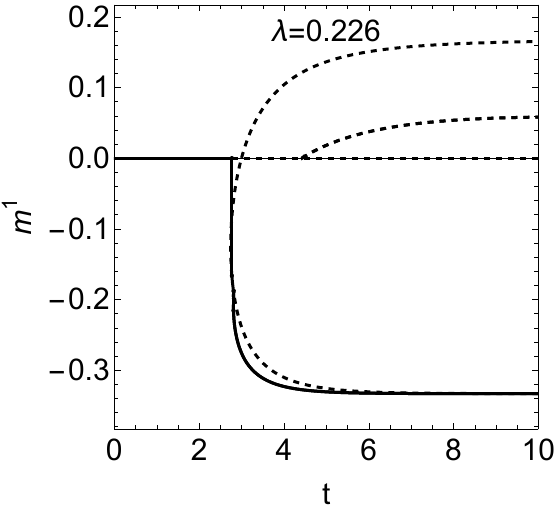}
		\includegraphics[height=4.6cm]{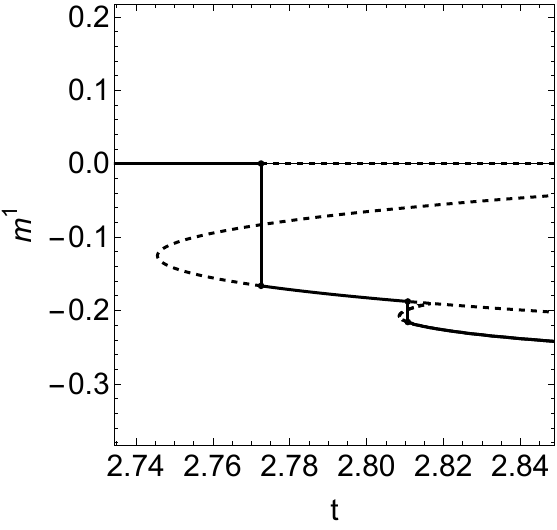} \\
		\includegraphics[height=4.6cm]{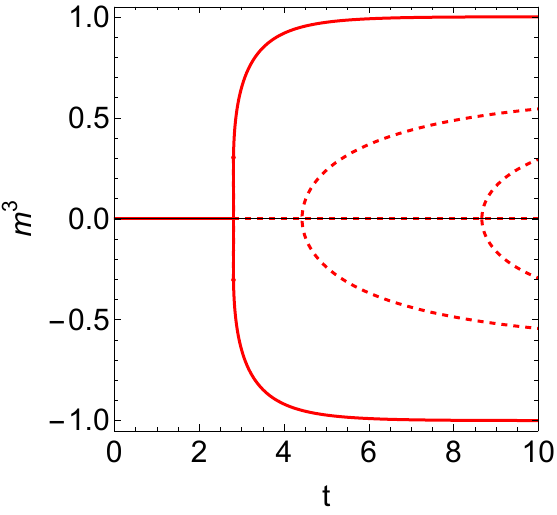}
		\includegraphics[height=4.6cm]{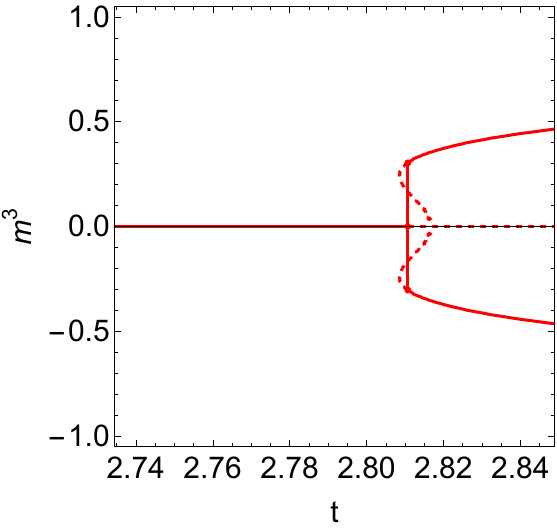}
		\caption{ Order parameters in the two-parameter reduction for values of $\lambda$ in the interval  $\lambda_{tc}^{(UB)}\leq \lambda<\lambda_{tp}$. The two column shows both order parameters versus $t$ for $\lambda=0.226 \in \left(\lambda_{tc}^{(UB)},\lambda_{tp}\right)$ as an example. The right column displays a magnification of the order parameters around the  tricritical temperature and triple point temperature.
			\label{Fig:mvst double shock} }  
	\end{center}
\end{figure}
 As shown in Figs.~\ref{Fig:phase diagram lambda t}  and ~\ref{Fig:mvst intermediate lambda} (left column),  for $\lambda=\lambda_{tp}=0.234$ the system displays   a triple point at which all three phases coexist. This situation is realised as the two shocks associated with the isotropic-to-uniaxial and uniaxial-to-biaxial phase transitions merge at zero external fields, giving rise to a single shock having an  amplitude given by the sum of the amplitudes of the two individual shocks. Consistently with the phase diagram in Fig.~\ref{Fig:phase diagram lambda t}, the uniaxial phase is not energetically accessible for $\lambda >\lambda_{tp}$. For instance, for $\lambda=0.284$ (centre column of Figure~\ref{Fig:mvst intermediate lambda}), the order parameters jump from the isotropic phase  to the biaxial phase as the temperature is lowered. This is also the case when $\lambda=1/3$ (right column). The case $\lambda=1/3$, as also discussed in \cite{gdmromanov2005},  leads to proportionality between the stable branches of the order parameters. Precisely this is $3 m^1 \pm m^3 =0$, corresponding to $T^{'} = \pm S$ in the convention adopted by  Virga and co-authors  in~\cite{gdmromanov2005}.  

For  values of $\lambda $ exceeding  the value $1/3$, the isotropic-to-biaxial phase transition remains first-order until a second tricritical point is disclosed. In Fig.~\ref{Fig:mvst large lambda}, the change in order of the phase isotropic-to-biaxial phase transition is displayed. For  $\lambda<\lambda_{tc}^{(IB)}$ (left column) both order parameters undergo a discontinuous jump from the isotropic solution to the biaxial one. The shock disappears when  $\lambda=\lambda_{tc}^{(IB)}=2/3$ is considered, and consequently both order parameters experience a gradient catastrophe at $t=t_{tc}^{(IB)}=3/2$. Larger values of $\lambda$ lead to a direct second-order transition from the isotropic to the biaxial phase, with the transition value given by $t^{(IB)}= 1/\lambda$. 

Our results are, both  qualitatively and quantitatively, consistent with previous studies \cite{romano2004,gdmromanov2005,preeti2011}. In particular, according to the Monte  Carlo simulation results reported in \cite{romano2004,gdmromanov2005,preeti2011}, the values of $\lambda$ at the first and second tricritical points are $ \simeq 0.24$ and $\simeq 2/3$, respectively. Moreover, the global Monte Carlo study performed in \cite{preeti2011} predicts  $\lambda \simeq 0.26$ for the triple point.
Consistency is also shown  with the extended analysis preformed in  \cite{gdmcmt2008landau}, where  a fourth degree Landau potential in the two tensors   ${\bf {Q}}$ and ${\bf {B}}$ is considered. Indeed, the Authors find a zero-fields phase diagram displaying a first- and second order isotropic-to-uniaxial-to-biaxial phase transitions and first- and second-order isotropic-to-biaxial phase transitions, thus disclosing two tricritical points and a single triple point. On the other hand, similar but not totally equivalent phase diagram topologies are described in \cite{longalandau,Mukherjee2009}, where a sixth degree Landau potential in a single order tensor has been analysed.  The phase diagram we obtain (Figure~\ref{Fig:phase diagram lambda t}) is also qualitatively in agreement with the one obtained  
	in \cite{sluckin2012}, where the Authors  derive a Landau expansion of a free energy which is intrinsically linked to a molecular-field theory, and then discuss  the  Sonnet-Virga-Durand limit. 
	
It goes without saying that  a single-tensor theory does not distinguish between intrinsic and phase biaxiality. This separation is made  clear and sharp by starting from  two molecular tensors ${\bf {q}}$ and ${\bf {b}}$ in the Hamiltonian (e.g.  \eqref{eq:H0}) which, correspondingly, give rise macroscopically to  two  order tensors   ${\bf {Q}}$ and ${\bf {B}}$  eventually accounting for the phase and intrinsic biaxiality, respectively. In this work, the  $\lambda-$model in the absence of external fields is shown to admit  two-parameter reductions, which  account for isotropic, uniaxial and intrinsic biaxial phases. Other models, as for example the Maier-Saupe model (retrieved from the $\lambda-$model setting $\lambda=0$), have been shown to produce uniaxiality and phase biaxiality  at the macroscopic level and  in the presence of external fields \cite{dgm nematics,Frisken,dunmur88,Mukherjee2013}.

\begin{figure}[h!]
	\begin{center}
		\includegraphics[height=4.6cm]{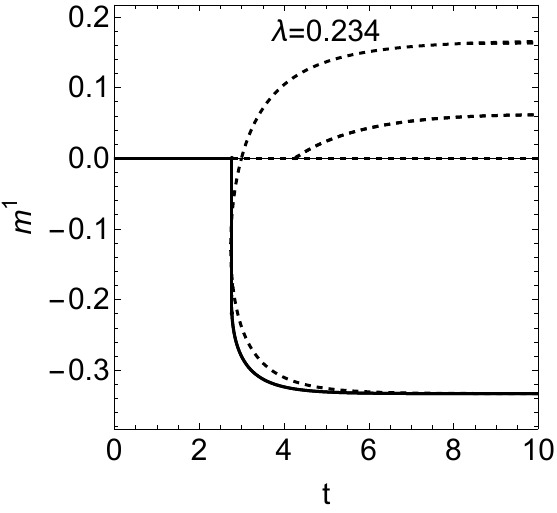}
		\includegraphics[height=4.6cm]{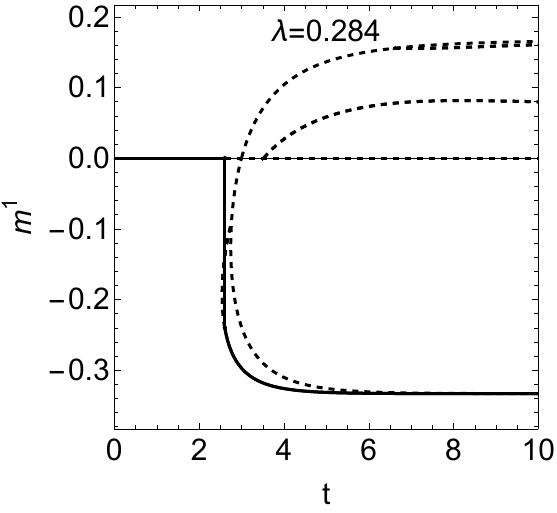}
		\includegraphics[height=4.6cm]{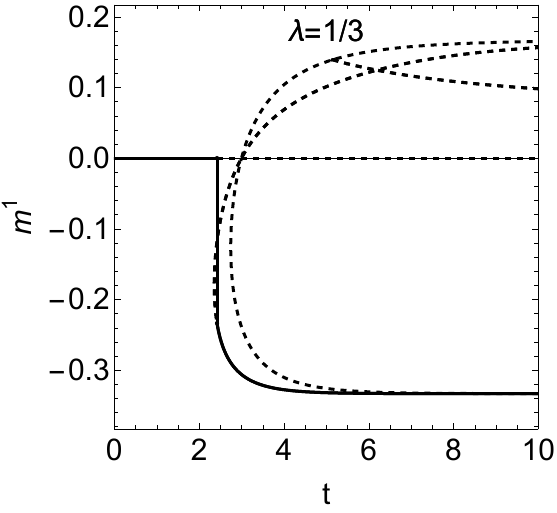}\\
		\includegraphics[height=4.6cm]{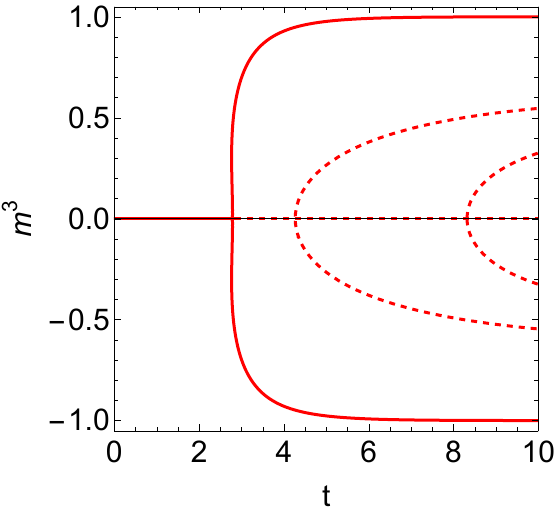}
		\includegraphics[height=4.6cm]{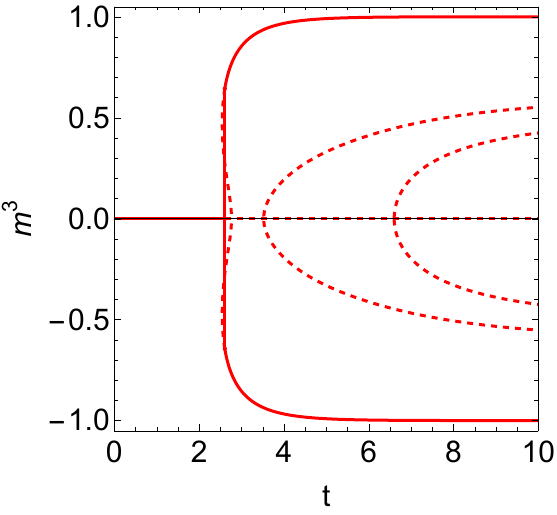}
		\includegraphics[height=4.6cm]{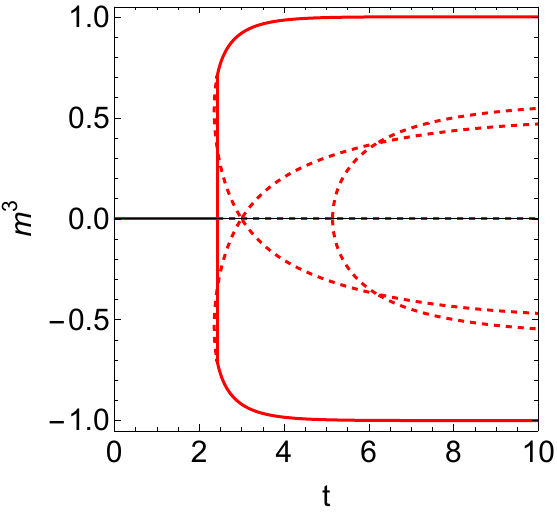}
		\caption{ Order parameters in the 2-parameter reduction for  values of $\lambda$ in the interval  $\lambda_{tp}\leq \lambda \leq 1/3$. \emph{Left column}: $\lambda=\lambda_{tp}=0.234$. \emph{Centre column}: $\lambda=0.284$. \emph{Right column}: $\lambda=1/3$. 
			\label{Fig:mvst intermediate lambda}
		}
	\end{center}
\end{figure}

\begin{figure}[h!]
	\begin{center}
		\includegraphics[height=4.6cm]{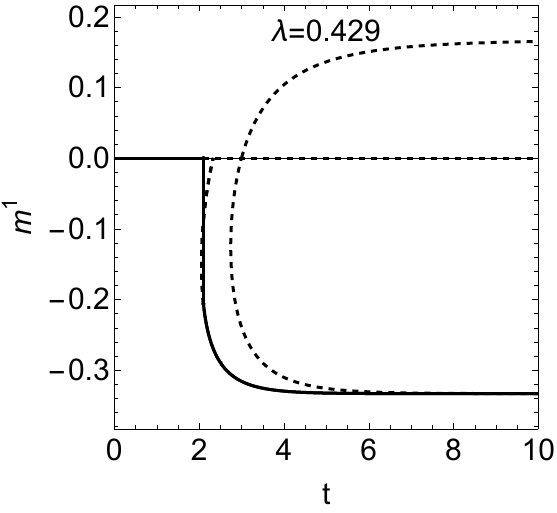}
		\includegraphics[height=4.6cm]{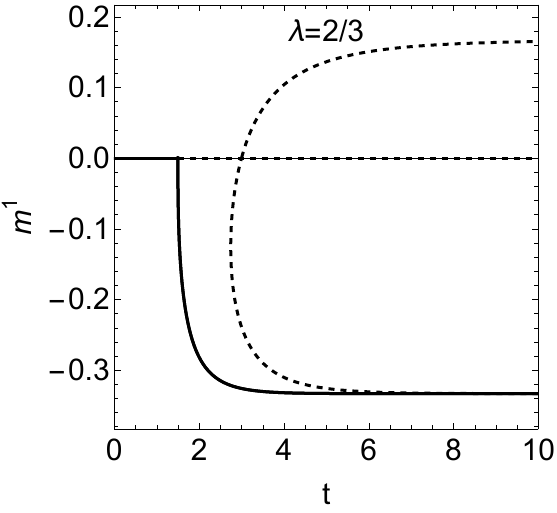}
		\includegraphics[height=4.6cm]{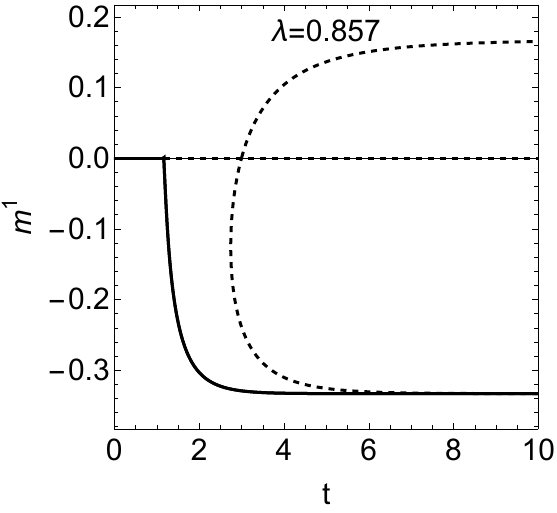}\\
		\includegraphics[height=4.6cm]{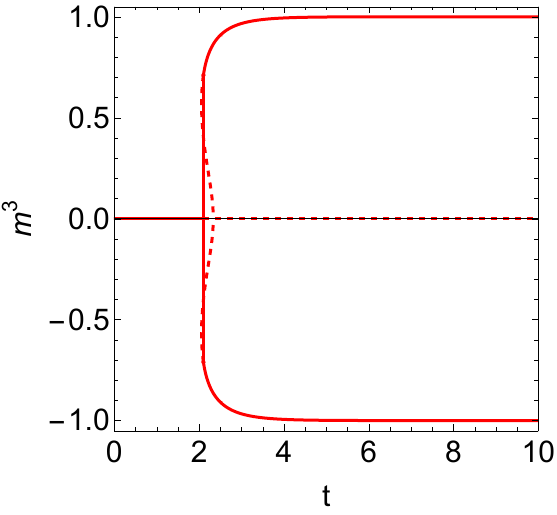}
		\includegraphics[height=4.6cm]{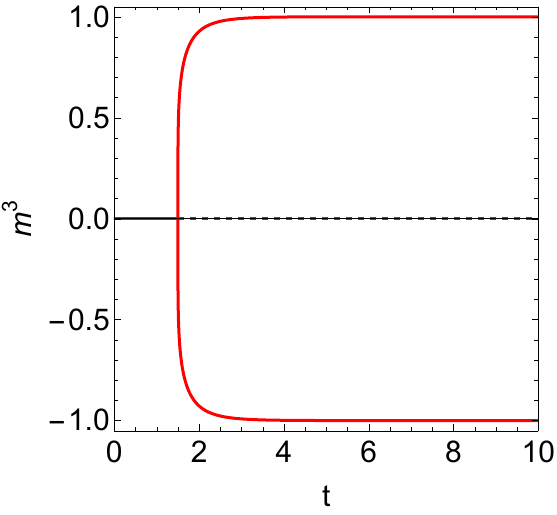}
		\includegraphics[height=4.6cm]{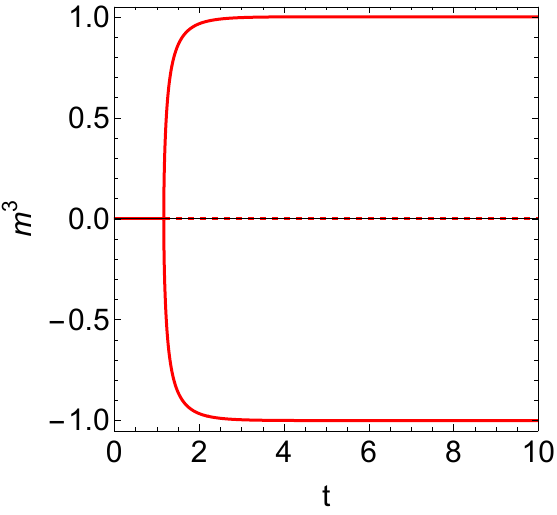}
		\caption{ Order parameters in the 2-parameter reduction for   $\lambda>1/3$. \emph{Left column}: $\lambda=0.429$. \emph{Centre column}: $\lambda=\lambda_{tc}^{(IB)}=2/3$. \emph{Right column}: $\lambda=0.857$. 
			\label{Fig:mvst large lambda}
		}
	\end{center}
\end{figure}

\section{Concluding remarks}
\label{sec:conclusion}
In this paper we have analysed in detail a discrete mean-field model for a biaxial nematic liquid crystal subject to external fields, using an approach based on the differential identity~\eqref{eq:heat} for the partition function. Upon the introduction of suitable variables, namely the order parameters, the multidimensional linear PDE satisfied by the partition function leads, in the thermodynamic limit, to a set of equations of state involving all four orientational order parameters. The equations are completely solvable by the method of characteristics proving the integrability of the model.

 Via the introduction of a novel set of order parameters corresponding to orientational Gibbs weights, we have obtained the equations of state in explicit form. We proved that, in the absence of external fields, the system is fully characterised by two-parameter reductions, and such reductions persist in the case of non-zero external fields subject to suitable constraints.  A detailed analysis demonstrates the existence of a rich phase diagram, that is remarkably consistent with the results known in the literature for the standard Maier-Saupe model and its biaxial extensions. Hence, the discrete models of the type studied in this paper capture, at least qualitatively, the most important features of continuum models with external fields for which explicit analytic formulae are not available. These results indeed encourage further studies on  integrable biaxial models where the Hamiltonian contains a more general nonlinear dependence on the tensors ${\bf q}$ and ${\bf b}$, as for instance the one implied by the full Straley pair-interaction potential. Moreover, the phenomenology encoded in the equations of state \eqref{eq:eosp 1}-\eqref{eq:eosp 4} for general field values, as well as the 2-parameter reductions obtained in Proposition~\ref{thm: reductions fields} for constrained fields, still need to be further analysed and described. Such cases are currently under investigation and results will be reported in due course.

\section*{Acknowledgements}
We would like to thank the Isaac Newton Institute for Mathematical Sciences for the hospitality during the six-month programme `Dispersive hydrodynamics: mathematics, simulation and experiments, with applications in nonlinear waves', Cambridge July-December 2022, under the EPSRC Grant Number EP/R014604/1, where this work has been partly developed, and  GNFM - Gruppo Nazionale per la Fisica Matematica, INdAM (Istituto Nazionale di Alta Matematica).  F.G.  also acknowledges the hospitality of the Department of Mathematics, Physics and Electrical Engineering of Northumbria University Newcastle.
A.M. is supported by the Leverhulme Trust Research Project Grant 2017-228, the Royal Society International Exchanges Grant IES-R2-170116 and London Mathematical Society.

\end{document}